\newtheorem{theorem}{Theorem}
\newtheorem{lemma}[theorem]{Lemma}
\newcommand{\alg}[1]{\textsc{#1}}
\newcommand{\cut}{\textbf{cut}}
\newcommand{\vol}{\textbf{vol}}
\renewcommand{\vec}[1]{\boldsymbol{\mathrm{#1}}}
\newcommand{\caln}{\mathcal{N}}
\providecommand{\vp}{\ensuremath{\vec{p}}}
\newcommand{\barS}{\bar{S}}
\newcommand{\barR}{\bar{R}}
\begin{document}
\title{Flow-Based Local Graph Clustering \\with Better Seed Set Inclusion}

\author{Nate Veldt \thanks{Mathematics Department, Purdue University. Email: \emph{lveldt@purdue.edu}. Supported by NSF award CCF-1149756. Part of the research was conducted while the author was a student computation intern at Lawrence Livermore National Laboratory.}\\
	\and
	Christine Klymko \thanks{Center for Applied Scientific Computing, Lawrence Livermore National Laboratory. Email: \emph{klymko1@llnl.gov}. This work was performed under the auspices of the U.S. Department of Energy by Lawrence Livermore National Laboratory under Contract DE-AC52-07NA27344.} \\
	\and
	David F. Gleich \thanks{Dept.\ of Computer Science, Purdue University. Email: \emph{dgleich@purdue.edu}. Supported by DARPA SIMPLEX and NSF award CCF-1149756, IIS-1422918, IIS-1546488, CCF093937
		and the Sloan Foundation. }}
\date{}


\date{}

\maketitle

\begin{abstract}
	Flow-based methods for local graph clustering have received significant recent attention for their theoretical cut improvement and runtime guarantees. In this work we present two improvements for using flow-based methods in real-world semi-supervised clustering problems. Our first contribution is a generalized objective function that allows practitioners to place strict and soft penalties on excluding specific seed nodes from the output set. This feature allows us to avoid the tendency, often exhibited by previous flow-based methods, to contract a large seed set into a small set of nodes that does not contain all or even most of the seed nodes. Our second contribution is a fast algorithm for minimizing our generalized objective function, based on a variant of the push-relabel algorithm for computing preflows. We make our approach very fast in practice by implementing a global relabeling heuristic and employing a warm-start procedure to quickly solve related cut problems. In practice our algorithm is faster than previous related flow-based methods, and is also more robust in detecting ground truth target regions in a graph, thanks to its ability to better incorporate semi-supervised information about target clusters.
\end{abstract}

\section{Introduction}
Local graph clustering is the task of finding tightly connected clusters of nodes nearby a set of seed vertices in a large graph. This task has been applied to solve problems in information retrieval~\cite{LangRao2004}, image segmentation~\cite{Mahoney:2012:LSM:2503308.2503318,VeldtGleichMahoney2016}, and community detection~\cite{AndersenLang2006,KloumannKleinberg2014}, among many other applications. In practice, seed nodes represent semi-supervised information about a hidden target cluster, and the goal is to recover or detect this cluster by combining knowledge of the seed set with observations about the topological structure of the network.

One popular approach for graph clustering is to apply flow-based algorithms, which repeatedly solve regionally biased minimum cut and maximum flow problems on the input graph. These methods satisfy very good theoretical cut improvement guarantees with respect to quotient-style clustering objectives such as conductance~\cite{AndersenLang2008,LangRao2004,OrecchiaZhu2014,VeldtGleichMahoney2016}. Additionally, some of these methods are strongly local, i.e.\ their runtime depends only on the size of the seed set and not the entire input graph.

Despite these attractive theoretical properties, existing flow-based methods exhibit drawbacks when it comes to solving real-world graph clustering problems. For example, in some cases these methods exhibit the tendency to discard important semi-supervised information in favor of optimizing a quotient-style clustering objective. More specifically, existing methods either shrink a seed set of nodes into a subset with better cut-to-size ratio~\cite{LangRao2004}, or try to find a good output cluster that overlaps well with the seed set but may not include all or even a majority of the seed nodes~\cite{AndersenLang2008,OrecchiaZhu2014,VeldtGleichMahoney2016}. While this is beneficial for obtaining theoretically good graph cuts, it is not always desirable in label propagation and community detection applications where the goal is to grow a set of seed nodes into a larger community. In addition to this, the previously cited flow-based methods treat all seed nodes equally, whereas in practice there may be varying levels of confidence for whether or not a seed node is a true representative of the undetected target cluster.

Although recently developed strongly-local methods constitute a major advancement in flow-based clustering, these also exhibit drawbacks in terms of implementation and practical performance. The \alg{LocalImprove} algorithm of Orecchia and Zhou~\cite{OrecchiaZhu2014} is known to have an extremely good theoretical runtime but relies on a complicated variation of Dinic's max-flow algorithm~\cite{Dinitz-1970-max-flow} that is difficult to implement in practice. In more recent work we developed an algorithm called \alg{SimpleLocal}, which provides a simplified framework for optimizing the same objective~\cite{VeldtGleichMahoney2016}. While this method is easy to implement and reasonably fast in practice, it still relies on repeatedly solving numerous exact maximum flow problems, and takes no advantage of warm-start solutions between consecutive flow problems that are closely related.

\paragraph{Our Contributions}
In this paper we improve the practical performance of flow-based methods for local clustering in two major ways. We first develop a generalized framework which takes better advantage of semi-supervised information about target clusters, and avoids the tendency of other methods to contract a large seed set into a small subcluster. Our approach allows users to place strict constraints and soft penalties on excluding specified seed nodes from the output set, depending on the user's level of confidence for whether or not each node should belong to the output set. 

Our second major contribution is a fast algorithm for minimizing our generalized objective function. We begin by showing that this objective can be minimized in strongly-local time using a meta-procedure that repeatedly solves localized minimum $s$-$t$ cuts, and doesn't require any explicit computation of maximum $s$-$t$ flows. This simultaneously generalizes and simplifies the meta-procedure we developed in previous work, which solves a more restricted objective function and requires the explicit computation of maximum flows as an intermediary step to obtaining minimum cuts~\cite{VeldtGleichMahoney2016}. We then implement our meta-procedure using a fast variant of the push-relabel algorithm~\cite{goldberg1988new}, which computes minimum cuts using preflows, rather than maximum flows. We make our algorithm extremely efficient using two key heuristics: a known global-relabeling scheme for the push-relabel algorithm~\cite{Cherkassky1997}, and a novel warm-start procedure which allows us to quickly solve consecutive minimum cut problems.

We validate our approach in several community detection experiments in real-world networks, and in several large-scale 3D image segmentation problems on graphs with hundreds of millions of edges. In practice our algorithm is faster that existing implementations of related flow-based methods, and allows us to more accurately detect ground truth clusters by better incorporating available knowledge of the target set.
\section{Background and Related Work}
We begin with an overview of notation and then provide a technical review of important concepts. Let $G = (V,E)$ represent an undirected and unweighted graph.\footnote{Following previous results, we will prove runtime and cut improvement guarantees for unweighted graphs, though in practice our implementations accommodate graphs with arbitrary floating point weights.}
For each node $v \in V$ let $d_v$ be its degree, i.e.\ the number of edges that have $v$ as an endpoint. For any set $S \subset V$ let $|E_S|$ be the number of interior edges in $S$, $\vol(S) = \sum_{v \in S} d_v$ be the \emph{volume} $S$, and $\cut(S) = \vol(S) - 2|E_S|$ denote the number of edges crossing from $S$ to $\bar{S} = V\backslash S$. Each set $S$ uniquely identifies a set of edges crossing from $S$ to $\bar{S}$, so we will frequently refer to a set of nodes $S$ as a \emph{cut} in a graph. One way to quantify the community structure of a set $S$ is by measuring its conductance:
\[ \phi(S) = \frac{\cut(S)} {\min\{ \vol(S), \vol(\bar{S})\}}. \]
A small value for $\phi(S)$ indicates that $S$ is well-connected internally but only loosely connected to the rest of the graph, and therefore represents a ``good'' cluster from a topological perspective.

\subsection{Local Variants of Conductance}
In local graph clustering we are given a seed (or \emph{reference}) set $R \subset V$ that is small with respect to the size of the graph. If we fix some value of a \emph{locality} parameter $\varepsilon \in \left[\frac{\vol(R)}{\vol(\bar{R})}, \infty \right)$, then the following objective function is a modification of the conductance score biased towards the set $R$, which we call the \emph{local conductance} measure:
\begin{equation}
\label{local-cond}
\phi_{R,\varepsilon}(S) = \frac{\cut(S)}{\vol(R\cap S) - \varepsilon \vol(\bar{R} \cap S)}. 
\end{equation}
One approach to localized community detection is to optimize the above objective over all sets $S$ such that the denominator is positive. This function was first introduced specifically for $\varepsilon = \frac{\vol(R)}{\vol(\bar{R})}$ by Andersen and Lang~\cite{AndersenLang2008}.
Orecchia and Zhou later considered larger values of $\varepsilon$, which effectively restricts the search space to sets $S$ that overlap significantly with $R$, leading to algorithms that minimize the objective in time independent of the size of the input graph $G$~\cite{OrecchiaZhu2014}. Both of these algorithms generalize the earlier Max-flow Quotient-cut Improvement (MQI) algorithm~\cite{LangRao2004}, which computes the minimum conductance subset of $R$, and fits the above paradigm if we allow $\varepsilon = \infty$.


\subsection{Minimizing Local Conductance}
\label{minlocalcond}
Although it is NP-hard to find the minimum conductance set of a graph $G$, one can minimize~\eqref{local-cond} efficiently by repeatedly solving a sequence of related minimum $s$-$t$ cut problems. First fix a parameter $\alpha \in (0,1)$ and construct a new graph $G_{st}$ using the following steps:
\begin{compactitem}
	\item Keep original nodes in $G$ and edges with weight 1
	\item Introduce source node $s$ and sink node $t$
	\item For every $r\in R$, connect $r$ to $s$ with weight $\alpha d_r$
	\item For every $j \in \bar{R}$, connect $j$ to $t$ with weight $\alpha \varepsilon d_j$.
\end{compactitem}
The minimum $s$-$t$ problem seeks the minimum weight set of edges in $G_{st}$ that, when removed, will separate the source node $s$ from the sink node $t$.
Every subset of nodes $S \subset V$ in $G$ induces an $s$-$t$ cut in $G_{st}$ where the two sides of the cut are $\{s\}\cup S$ and $\{t\} \cup \bar{S}$. The weight of this $s$-$t$ cut can be given entirely in terms of cuts and volumes of sets in $G$:
\begin{align*}
{\alg{STcut}}(S) &= \cut(S) + \alpha \varepsilon \vol(\bar{R} \cap S) + \alpha \vol(R \cap \bar{S})\\
&= \cut(S) + \alpha \varepsilon \vol(\bar{R} \cap S) - \alpha \vol(R \cap {S}) + \alpha \vol(R).
\end{align*}
If there exists some $S$ such that $\alg{STcut}(S) < \alpha \vol(R)$, one can show with a few steps of algebra that this implies $\phi_{R,\varepsilon}(S) < \alpha$. Therefore, $\phi_{R,\varepsilon}(S)$ can be minimized by finding the smallest $\alpha$ such that the minimum $s$-$t$ cut of $G_{st}$ is exactly $\alpha\vol(R)$. This can be accomplished by performing binary search over $\alpha$ or simply starting with $\alpha = \phi_{R,\varepsilon}(R)$ and iteratively finding minimum $s$-$t$ cuts in $G_{st}$ for increasingly smaller values of $\alpha$ until no more improvement is possible.
For sufficiently large $\varepsilon$, there is no need to explicitly construct $G_{st}$ in order to solve the min-cut objective. Instead, localized techniques can be used which repeatedly solve flow and cut problems on small subgraphs until a global solution is reached~\cite{OrecchiaZhu2014,VeldtGleichMahoney2016}. For more details on the flow-based framework presented here and its relationship to regularized optimization problems and random-walk based methods, we refer to related papers~\cite{Fountoulakis2017,pmlr-v32-gleich14}.

\subsection{Maximum $s$-$t$ Flows}
\label{maxflow}
Flow-based methods which operate by repeatedly solving minimum $s$-$t$ cut problems in the above manner include \alg{MQI}~\cite{LangRao2004}, \alg{FlowImprove}~\cite{AndersenLang2008}, \alg{LocalImprove}~\cite{OrecchiaZhu2014}, and \alg{SimpleLocal}~\cite{VeldtGleichMahoney2016}. These methods all obtain small $s$-$t$ cuts by solving the dual maximum $s$-$t$ flow problem. A major contribution in our work is to show that strongly-local algorithms for minimizing localized conductance measures can be obtained without any explicit computation of maximum $s$-$t$ flows. However, an efficient algorithm we develop later will apply explicit \emph{preflow} computations, so we briefly review key flow concepts here.

Let $G_A = (V\cup \{s,t\}, A)$ be a directed graph with a distinguished source and sink nodes $s$ and $t$ and capacities $c_{ij} > 0$ for each directed edge $(i,j)$ (called an \emph{arc}) in a set $A$. We can turn any undirected graph $G = (V\cup \{s,t\}, E)$ into a directed graph by replacing each edge $\{i,j\} \in E$ with two arcs $(i,j)$ and $(j,i)$. A valid $s$-$t$ flow on $G_A$ is a function $F : A \rightarrow \mathbb{R}_{\geq 0}$ which assigns flow values $f_{ij}$ satisfying 
\begin{align}
&  f_{ij} \leq c_{ij}  \text{ for $(i,j) \in A$} \\
& {\sum_{(j,i) \in A} f_{ji} = \sum_{(i,k) \in A} f_{ik} \text{ for $i \in V$} }
\end{align}
which are referred to as \emph{capacity} and \emph{flow} constraints respectively.
The flow $F$ is defined to be skew-symmetric, i.e. $f_{ij} = -f_{ji}$. The maximum $s$-$t$ flow problem seeks the flow $F$ which routes a maximum amount of flow from $s$ to $t$. Given a flow $F$ for a graph $G_A$, the residual graph $G_F = (V\cup\{s,t\}, A_F)$ is defined to be the directed network in which arc $(i,j) \in A_F$ has capacity $c_{ij}^F = c_{ij} - f_{ij}$. If an arc has nonzero residual capacity, it means that one can push more flow across it in search of new ways to route flow from $s$ to $t$. An arc $(i,j)$ is saturated if it has a residual capacity of zero. A flow $F$ is a maximum $s$-$t$ flow if and only if there exists no path of unsaturated arcs from $s$ to $t$. In this case, the set of nodes $S$ reachable from $s$ via a path of unsaturated arcs defines the minimum $s$-$t$ cut.

\subsection{Random Walks and Other Diffusion Based Clustering Algorithms}
Spectral methods are another widely popular approach to local graph clustering. Among these methods, the Andersen-Chung-Lang \alg{Push} procedure for computing an approximate personalized PageRank vector is well-known for its strongly-local runtime and good cut improvement guarantees~\cite{AndersenChungLang2006}. 
Random-walk based spectral methods typically find local cuts in a graph by running a localized diffusion from a small set of seed nodes. This diffusion produces an embedding with limited support over the nodes in the graph, which can then be rounded using some form of a sweet cut procedure to produce a cut. In contrast, flow-based methods solve biased minimum-cut computations and directly produce a cut rather than an embedding which must be rounded. 

Another key distinction between random-walk and flow-based approaches is the type of seed set these methods require. Random-walk diffusions are typically able to grow a single seed node or a small seed set into a larger cluster with good conductance. For example, Andersen et al.~\cite{AndersenChungLang2006} showed that if one starts from any one of a large number of individual seed nodes in a target cluster $T$, the \alg{Push} algorithm will return a localized cluster with conductance at most $O(\sqrt{\phi(T)})$. Flow-based methods are able to provide stronger cut improvement guarantees, but can only do so if they begin with a large seed set that has significant overlap with the target cluster $T$ (see e.g.\ the results in~\cite{AndersenLang2008,OrecchiaZhu2014,VeldtGleichMahoney2016}). In practice, flow-based methods may perform poorly if the seed set $R$ is too small. One approach for obtaining a large enough seed set is to first run a spectral diffusion from a small number of starting nodes and then refine the output using the flow-based method. Another approach is to take the starting seed nodes and grow them by a neighborhood with a small radius to produce a localized seed set that is sufficiently large for flow-based methods to output meaningful results.

\paragraph{Other Diffusion Based Methods}
In addition to random-walk based methods, there also exist other localized community detection algorithms that operate by computing a diffusion and then performing a sweep cut on the resulting embedding. Among others, Kloster and Gleich~\cite{Kloster-2014-hkrelax} developed a fast method for computing local communities based on the heat kernel diffusion. The runtime of their algorithm, \alg{hk-relax}, depends on the parameters of the diffusion but is independent of the size of the input graph; hence the method is strongly-local.
More recently, Wang et al.~\cite{pmlr-v70-wang17b} introduced the Capacity Releasing Diffusion (CRD), another strongly local algorithm, which spreads mass around nodes in a graph using a flow-like mechanism. Although CRD incorporates flow-based dynamics, we note that it does not compute biased minimum $s$-$t$ cuts on the input graph. For clarity, in this paper we reserve the term \emph{flow-based} to refer to methods that fit the paradigm outlined in Section~\ref{minlocalcond}.

\section{Generalized Local Clustering Objective}
In order to develop a flow-based method that places a higher emphasis on agreeing with the seed set, we begin by presenting a generalization of the local conductance objective~\eqref{local-cond}. After introducing the objective, we prove cut improvement guarantees that can be achieved if this objective is minimized in practice. In Section~\ref{sl-meta}, we prove that the objective can be solved in strongly-local time using a meta-procedure that repeatedly applies minimum $s$-$t$ cut solvers as subroutines with no explicit calculation of maximum $s$-$t$ flows. In Section~\ref{pr-implementation}, we provide details for how to implement the meta-procedure using a fast variant of push-relabel method with two key heuristics.

\subsection{The Seed-Penalized Conductance Score}
Let $G = (V,E)$ be an undirected and unweighted graph, and $R$ a small set of nodes that we wish to grow into a larger cluster that we will call $S$. Unlike other methods, we assume there exists a designated set of nodes $R_s \subseteq R$ which must be included in the output set, and a weight $p_i \geq 0$ for every other node $r_i \in R$ which indicates our level of confidence that $r_i$ should also be included in the output. We start by introducing the following new \emph{overlap score} between $R$ and $S$:
\begin{equation*}
\mathcal{O}_{R}(S) = \vol(R\cap S) - \varepsilon \vol(S\cap \barR) - \sum_{r \in R} p_r d_r \chi_{\barS}(r)
\end{equation*}
where $\vp = (p_i)$ is the vector of penalty weights for nodes in the seed set, $\chi_{\bar{S}}$ is the indicator function for nodes in $\bar{S}$, and $\varepsilon$ is a locality parameter controlling how much we allow the output set to include nodes outside $R$. The first term rewards a high intersection between $S$ and $R$, the second term penalizes the inclusion of nodes outside $R$, and the third term introduces a penalty for nodes in $R$ that are not in $S$. Given this definition of overlap, our goal is to minimize the following objective, which we refer to as seed-penalized conductance score:
\begin{equation}
\label{eq:sc}
\pi_R(S) = \begin{cases} \frac{\cut(S)}{\mathcal{O}_{R}(S)} & \mbox{ if } \mathcal{O}_{R}(S) > 0, R_s \subseteq S \\
\infty & \mbox{ otherwise}
\end{cases}
\end{equation}
To keep notation simple we only include the set $R$ in the subscript of $\mathcal{O}_R(S)$ and $\pi_R(S)$, though we note that these also depend on $R_s$, $\varepsilon$, and $\vp$, which are fixed parameters chosen at the outset of a problem.

\subsection{Cut Improvement Guarantee}
Despite the differences between~\eqref{eq:sc} and the standard conductance measure, we can prove that minimizing the former will give strong cut improvement guarantees in terms of the latter. This result is closely related to similar cut improvement guarantees for variants of the local conductance objective~\cite{AndersenLang2008,VeldtGleichMahoney2016}, but here we focus on the case where $R$ is completely contained in some ground truth target cluster $T$, since in our work we are especially concerned with approaches that grow a seed set into a larger cluster. Note that the following result is in fact independent of $R_s$ and $\vp$, so it holds regardless of how strict the seed node exclusion penalties are.
\begin{theorem}
	Let $G = (V,E)$ be connected and $R$ be a seed set. Let $T$ be any set of nodes containing $R$ with $\vol(T) \leq \vol(\bar{T})$, and assume that $\vol(R) = \gamma \vol(T)$ for some $\gamma \in (0,1)$. For any $\varepsilon \in \Big[ \frac{2 \vol(R)}{\vol(G) - 2\vol(R) }, \frac{\gamma}{1-\gamma}\Big)$, if $S^*$ is the set of nodes minimizing objective~\eqref{eq:sc}, then $\phi(S^*) \leq C \phi(T)$ where $C = \frac{1}{\gamma + \varepsilon \gamma - \varepsilon}$.
\end{theorem}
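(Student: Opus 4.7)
The plan is to compare $\pi_R$ at the optimum $S^*$ to its value at the target cluster $T$ and show that (i) $\pi_R(T)$ evaluates directly to $C\phi(T)$, and (ii) $\phi(S^*) \leq \pi_R(S^*)$. Combining these with the optimality inequality $\pi_R(S^*)\leq \pi_R(T)$ gives the theorem. Note that $R_s \subseteq R\subseteq T$, so $T$ is an admissible argument to $\pi_R$, and the result will not require the penalty vector $\vp$ or the strict set $R_s$ to have any particular value.

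First I would evaluate $\pi_R(T)$. Because $R\subseteq T$, we have $\vol(R\cap T)=\vol(R)=\gamma\vol(T)$ and $\vol(T\cap \barR)=(1-\gamma)\vol(T)$, while every $r\in R$ lies in $T$ so the penalty sum $\sum_{r\in R} p_r d_r\chi_{\barS}(r)$ vanishes. Thus $\mathcal{O}_R(T)=(\gamma-\varepsilon(1-\gamma))\vol(T)$, which is strictly positive because $\varepsilon<\gamma/(1-\gamma)$; this step also confirms $\pi_R(T)<\infty$. Since $\vol(T)\leq\vol(\barT)$ gives $\phi(T)=\cut(T)/\vol(T)$, direct substitution yields $\pi_R(T)=\phi(T)/(\gamma+\varepsilon\gamma-\varepsilon)=C\phi(T)$.

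Next I would prove $\phi(S^*)\leq \pi_R(S^*)$. It suffices to show $\mathcal{O}_R(S^*)\leq \min\{\vol(S^*),\vol(\bar{S^*})\}$, since the numerators agree. The bound $\mathcal{O}_R(S^*)\leq \vol(R\cap S^*)\leq \vol(S^*)$ is immediate from the definition. For the other side I would use the fact that $\pi_R(S^*)\leq \pi_R(T)<\infty$ forces $\mathcal{O}_R(S^*)>0$, which in turn implies
\[
\varepsilon\,\vol(\barR\cap S^*) \;<\; \vol(R\cap S^*) \;\leq\; \vol(R),
\]
so $\vol(S^*)=\vol(R\cap S^*)+\vol(\barR\cap S^*)\leq \vol(R)(1+1/\varepsilon)$. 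The lower bound $\varepsilon\geq \tfrac{2\vol(R)}{\vol(G)-2\vol(R)}$ is exactly what is needed to convert this into $\vol(S^*)\leq \vol(G)/2\leq \vol(\bar{S^*})$, completing the comparison.

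The only step with any subtlety is the volume bound in the last paragraph: the lower bound on $\varepsilon$ was engineered precisely so that a set with positive overlap cannot be larger than half the graph, and everything else is clean substitution. Once those pieces are in place, the chain $\phi(S^*)\leq \pi_R(S^*)\leq \pi_R(T)=C\phi(T)$ closes the proof.
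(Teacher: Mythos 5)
Your proposal is correct and follows essentially the same route as the paper's proof: bound $\vol(S^*)\leq(1+1/\varepsilon)\vol(R)\leq\vol(G)/2$ from the positivity of $\mathcal{O}_R(S^*)$ and the lower bound on $\varepsilon$, deduce $\phi(S^*)\leq\pi_R(S^*)$, and combine with $\pi_R(S^*)\leq\pi_R(T)=C\phi(T)$. The only differences are presentational — you make explicit the checks that $\mathcal{O}_R(T)>0$ and that $\pi_R(T)<\infty$ forces $\mathcal{O}_R(S^*)>0$, which the paper leaves implicit.
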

\begin{proof}
Note the following bound on the volume of $S^*$:
\begin{align*}
& 0 < \mathcal{O}_{R}(S^*) < \vol{(R \cap S^*)} - \varepsilon \vol{(\bar{R} \cap S^*)}\\
\implies & 0 < (1+\varepsilon) \vol{(R \cap S^*)} - \varepsilon \vol(S^*)\\
\implies & \vol(S^*) < \left( 1 + \frac{1}{\varepsilon}\right) \vol(R).
\end{align*}
Combining this with the lower bound on $\varepsilon$ given in the theorem statement, we see that the volume of $S^*$ is less than $\vol(G)/2$. Next observe that $\vol(S^*) > \mathcal{O}_{R}(S^*)$, so
\[\phi(S^*) = \cut(S^*)/\vol(S^*)<  \cut(S^*)/ \mathcal{O}_{R}(S^*) =\pi_R(S^*).\]
Because $R$ is contained in $T$, we have $\sum_{r \in R} p_r d_r \chi_{\bar{T}}(r) = 0$ and $\vol{(T \cap R)} = \vol(R)$. Therefore, 
\begin{align*}
\mathcal{O}_{R}(T) = \vol(R)-\varepsilon \vol{(T \cap \bar{R})} = (1+\varepsilon) \vol(R) - \varepsilon \vol(T) =  \vol(T)((1 + \varepsilon)\gamma -\varepsilon ).
\end{align*}
Since $S^*$ minimizes~\eqref{eq:sc},
\begin{align*}
\phi(S^*) &< \pi_R(S^*) \leq \pi_R(T) = \frac{\cut(T)}{\mathcal{O}_{R}(T)} =\frac{1}{(1+\varepsilon)\gamma - \varepsilon }  \frac{\cut(T)}{\vol(T)} = C \phi(T).
\end{align*} 
\end{proof} 
If we select $\varepsilon$ to be at its lower bound defined above, the approximation ratio will be $C = \frac{\vol(G) - 2\vol(R)}{\gamma \vol(G) + \gamma \vol(R) - 2\vol(R)}$. If $\vol(R)$ is very small compared to the overall size of the graph, then the approximation factor goes to $1/\gamma$ as the size of the graph increases for a fixed seed set.

\subsection{Minimizing Seed-Penalized Conductance}
\label{minspc}
As is the case for local conductance, objective~\eqref{eq:sc} can be minimized in polynomial time by solving a sequence of minimum $s$-$t$ cut problems. Fix $\alpha \in (0,1)$ and assume we wish to find whether there exists some $S$ such that $\pi_R(S) < \alpha$. We construct a new version of the \emph{cut graph} $G_{st}$ which includes all nodes in $G$ and an additional source $s$ and sink $t$. For any node $r \in R_s$, we will assign a penalty variable $p_r = \vol(G)/\alpha$. We add an edge from $s$ to each $r \in R$ with weight $\alpha(1+p_r)d_r$. The chosen weight for nodes in $R_s$ is large enough to guarantee that a minimum cut will never separate any of these nodes from $s$. Then, for each node $w \in \bar{R}$, we add an edge from $w$ to $t$ with weight $\alpha \varepsilon d_w$. 

For any set of non-terminal nodes $S \subseteq V$, the $s$-$t$ cut associated with that set can be expressed in terms of cuts and volumes in the original graph $G$:
\begin{equation*}
{\textstyle \cut(S) + \alpha\varepsilon \vol(\bar{R} \cap {S}) + \alpha \sum_{r\in R} d_r (1+p_r) \chi_{\barS}(r)}. \notag 
\end{equation*}
Using the observation that $\alpha\vol(R \cap \bar{S}) = \alpha \vol(R) - \alpha \vol(R\cap S)$, we can rearrange this into the following objective function:
\begin{equation}
\label{mincut}
f_{R,\varepsilon}^\alpha(S) = \cut(S) - \alpha \mathcal{O}_{R}(S) + \alpha \vol(R),
\end{equation}
so $f_{R,\varepsilon}(S) < \alpha \vol(R)$ if and only if $\cut(S)/\mathcal{O}_{R}(S) < \alpha$. Thus, solving the minimum $s$-$t$ cut objective on $G_{st}$ will tell us whether there exists some $S$ with seed-penalized conductance less than $\alpha$.


Given any procedure for minimizing objective~\eqref{mincut} (e.g.\ a generic minimum $s$-$t$ cut solver), we can minimize seed-penalized local conductance using Algorithm~\ref{alg:wrapper}.
\begin{algorithm}[t]
	\caption{Minimizing seed-penalized conductance }
	\label{alg:wrapper}
	\begin{algorithmic}
		\STATE \textbf{Input:} $G$, $R$, $\varepsilon$, $\vp$
		\STATE $\alpha := 2$
		\STATE $\alpha_{new} = \pi_R(R) = \phi(R)$
		\STATE $S = R$
		\WHILE{$\alpha_{new} < \alpha$}
		\STATE $S_{best} \leftarrow S$
		\STATE $\alpha \leftarrow \alpha_{new}$
		\STATE $S \leftarrow \arg \min f_{\varepsilon,R}^\alpha(S)$
		\STATE $\alpha_{new} \leftarrow \pi_R(S)$
		\ENDWHILE
		\STATE \textbf{Return:} $S_{best}$
	\end{algorithmic}
\end{algorithm}
%
We end this section by showing a bound on the number of iterations for Algorithm~\ref{alg:wrapper}, by slightly adapting the techniques Andersen and Lang~\cite{AndersenLang2008} used to prove a similar bound for a more restrictive objective function.
\begin{theorem}
	\label{thm:cutR}
	Algorithm~\ref{alg:wrapper} will need to solve min-cut objective~\eqref{mincut} at most $\cut(R)$ times.
\end{theorem}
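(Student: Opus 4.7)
The plan is to show that $\cut(S_k)$ decreases by at least one at every iteration $k$ in which Algorithm~\ref{alg:wrapper} does not terminate. Since $\cut(S_0) = \cut(R)$ and $\cut$ is a non-negative integer on an unweighted graph, the while loop can then run at most $\cut(R)$ times.

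First I would rewrite the minimum $s$-$t$ cut objective as $f_{R,\varepsilon}^{\alpha}(S) = \cut(S) + \alpha(\vol(R) - \mathcal{O}_{R}(S))$, which is affine in $\alpha$ for any fixed feasible $S$ (i.e., $S \supseteq R_s$). Because $\alpha_{k-1} = \pi_R(S_{k-1})$, this identity gives $f_{R,\varepsilon}^{\alpha_{k-1}}(S_{k-1}) = \alpha_{k-1}\vol(R)$, so the non-termination condition at iteration $k$ is exactly $f_{R,\varepsilon}^{\alpha_{k-1}}(S_k) < \alpha_{k-1}\vol(R)$, equivalently $\cut(S_k) < \alpha_{k-1} \mathcal{O}_{R}(S_k)$.

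The heart of the argument is an overlap-monotonicity claim: $\mathcal{O}_{R}(S_k) \leq \mathcal{O}_{R}(S_{k-1})$ for $k \geq 2$. I would combine the min-cut optimality of $S_k$ at parameter $\alpha_{k-1}$ with that of $S_{k-1}$ at parameter $\alpha_{k-2}$,
\begin{align*}
\cut(S_k) - \alpha_{k-1} \mathcal{O}_{R}(S_k) &\leq \cut(S_{k-1}) - \alpha_{k-1} \mathcal{O}_{R}(S_{k-1}), \\
\cut(S_{k-1}) - \alpha_{k-2} \mathcal{O}_{R}(S_{k-1}) &\leq \cut(S_k) - \alpha_{k-2} \mathcal{O}_{R}(S_k),
\end{align*}
and add them; after the $\cut$ terms cancel, the result rearranges to $(\mathcal{O}_{R}(S_k) - \mathcal{O}_{R}(S_{k-1}))(\alpha_{k-1} - \alpha_{k-2}) \geq 0$, and since the algorithm is not terminating we have $\alpha_{k-1} < \alpha_{k-2}$, forcing $\mathcal{O}_{R}(S_k) \leq \mathcal{O}_{R}(S_{k-1})$. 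The base case $k=1$ is handled using the universal bound $\mathcal{O}_{R}(S) \leq \vol(R) = \mathcal{O}_{R}(R)$, since the initial set $S_0 = R$ does not arise from a min-cut call.

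Combining monotonicity with the strict ratio decrease from non-termination then gives
\[\cut(S_k) = \pi_R(S_k)\,\mathcal{O}_{R}(S_k) < \pi_R(S_{k-1})\,\mathcal{O}_{R}(S_k) \leq \pi_R(S_{k-1})\,\mathcal{O}_{R}(S_{k-1}) = \cut(S_{k-1}),\]
and integrality upgrades this to $\cut(S_k) \leq \cut(S_{k-1}) - 1$. The main obstacle is the overlap-monotonicity step, since pairing two minimality inequalities at different $\alpha$ values is delicate. A subtlety specific to the generalized objective is that $\mathcal{O}_{R}$ includes the seed-exclusion penalty $\sum_{r \in R} p_r d_r \chi_{\bar{S}}(r)$, but the choice $p_r = \vol(G)/\alpha$ for $r \in R_s$ ensures $R_s \subseteq S_k$ on every iterate, so those $\alpha$-dependent terms vanish on the feasible set and the argument reduces to a direct adaptation of Andersen and Lang's technique.
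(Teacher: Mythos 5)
Your proposal is correct and follows essentially the same route as the paper's proof: both hinge on showing the denominator $\mathcal{O}_R(S_k)$ is monotone across iterations via min-cut optimality inequalities, then conclude that $\cut(S_k) = \pi_R(S_k)\,\mathcal{O}_R(S_k)$ strictly decreases, so integrality on an unweighted graph bounds the number of iterations by $\cut(R)$. The only (immaterial) difference is that you derive monotonicity by adding two optimality inequalities at $\alpha_{k-1}$ and $\alpha_{k-2}$, whereas the paper uses a single optimality inequality together with the identity $f^{\alpha}(S) = \alpha\vol(R) + \mathcal{O}_R(S)(\pi_R(S)-\alpha)$; your explicit handling of the base case and of the $\alpha$-dependent $R_s$ penalties is a welcome extra care the paper leaves implicit.
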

\begin{proof}
Since $R$ and $\varepsilon$ and $\vp$ are fixed at the outset of the algorithm we will write $f^\alpha$ instead of $f_{R,\varepsilon}^\alpha$ and $\mathcal{O}$ rather than $\mathcal{O}_{R}$. Consider two consecutive iterations in which Algorithm~\ref{alg:wrapper} successfully finds sets with improved seed-penalized conductance and therefore doesn't terminate. Let $S_i$ be the set returned after the $(i-1)$st iteration, so $S_{i} = \arg\min f^{\alpha_{i-1}}(S)$ for some $\alpha_{i-1}$, and set $\alpha_i = \pi_R(S_i) = \cut(S_i)/ \mathcal{O}(S_i) < \alpha_{i-1}$. Similarly, $S_{i+1} = \arg\min f^{\alpha_{i}}(S)$ and $\alpha_{i+1} = \pi_R(S_{i+1}) < \alpha_i$.
Note that
\begin{align*}
f^{\alpha_{i-1}}(S_i) &= \alpha_{i-1}\vol(R) + \cut(S_i) - \alpha_{i-1} \mathcal{O}(S_i)\\
							&= \alpha_{i-1}\vol(R) + \mathcal{O}(S_i) ( \pi_R(S_i) - \alpha_{i-1}) \\
							&= \alpha_{i-1}\vol(R) + \mathcal{O}(S_i) (\alpha_i - \alpha_{i-1})
\end{align*}
and similarly
\[ f^{\alpha_{i-1}}(S_{i+1}) = \alpha_{i-1}\vol(R) + \mathcal{O}(S_{i+1}) (\alpha_{i+1} - \alpha_{i-1}).  \]
Because $S_{i}$ minimizes $f^{\alpha_{i-1}}$ we know that $f^{\alpha_{i-1}} (S_i) \leq f^{\alpha_{i-1}} (S_{i+1})$, which implies that
\[ \mathcal{O}(S_{i})(\alpha_{i} - \alpha_{i-1}) \leq \mathcal{O}(S_{i+1})(\alpha_{i+1} - \alpha_{i-1})\]
and since $(\alpha_{i+1} - \alpha_{i-1}) < (\alpha_{i} - \alpha_{i-1}) < 0$ we see that $\mathcal{O}(S_{i+1}) < \mathcal{O}(S_i)$. Thus both $\pi_R(S)$ and its denominator are strictly decreasing during the course of the algorithm, so $\cut(R)$ must strictly decrease at each step as well. Since we assume the graph is unweighted, there are at most $\cut(R)$ iterations in total.
\end{proof} 

\section{The Strongly-Local Meta-Procedure}
\label{sl-meta}
The results of the previous section imply that Algorithm~\ref{alg:wrapper} can be run in polynomial time using any black-box min $s$-$t$ cut solver. In this section we will prove a much stronger result by showing that objective~\eqref{mincut} can be minimized in strongly-local time using a very simple two-step meta-procedure. A significant feature of this meta-procedure is that the algorithm does not require any explicit computation of maximum flows, but relies on a very simple repeated two-step procedure for localized minimum $s$-$t$ cuts. 

\paragraph{Local Graph Operations.} In order minimize~\eqref{mincut} without touching all of $G = (V,E)$, we will repeatedly solve a variant of objective~\eqref{mincut} on a growing subgraph $L = (V,E_L)$ called the \emph{local graph}, which contains a restricted edge set $E_L \subset E$. In theory $L$ is assumed to have the same node set $V$ but many of these nodes will have degree zero in $L$,  so we will not need to explicitly perform computations with all nodes in practice.

We consider the following localized variant of~\eqref{mincut}, which corresponds to a minimum $s$-$t$ cut problem on a subgraph $L_{st}$ of the cut graph $G_{st}$:
\begin{equation}
\label{localmincut}
f_{L}^\alpha(S) = \cut_L(S) - \alpha \mathcal{O}_{R}(S) + \alpha \vol(R),
\end{equation}
where the only difference from~\eqref{mincut} is that $\cut_L(S)$ is defined to the be number of edges in $E_L$ between $S$ and $\bar{S}$, rather than the number of edges in $E$, thus $f_L^\alpha(S) \leq f_G(S) = f_{R,\varepsilon}^\alpha(S)$ for all $S \subset V$ and for any such subgraph $L$ of $G$. We will use the notation $d_i^L$ to denote the degree of node $i$ in $L$, which is always less than or equal to $d_i$. We then distinguish between an \emph{edge-complete} set of nodes $L_C= \{i \in V : d_i = d_i^L \} $ and an \emph{edge-incomplete} nodes $L_I = \{i \in V : d_i > d_i^L \}$ in $L$.

Let $S_L$ be the minimizer of~\eqref{localmincut} for a fixed subgraph $L$. The following lemma shows that if $S_L$ is made up entirely of edge-complete nodes, then this set also minimizes the global objective function $f_G = f_{R,\varepsilon}^\alpha$~\eqref{mincut}.
\begin{lemma}
	\label{lem:local}
	Let $S_L = \arg \min f_L(S)$. If $S_L \subseteq L_C$ then $S_L = \arg \min f_G(S)$.
\end{lemma}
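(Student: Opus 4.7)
The plan is a short two-line argument combining two observations: first, that for an edge-complete set the localized cut equals the global cut, and second, that the localized objective is always a lower bound on the global objective.

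I would begin by noting the already-stated inequality $f_L^\alpha(S) \le f_G^\alpha(S)$ for every $S \subseteq V$, which follows because $E_L \subseteq E$ implies $\cut_L(S) \le \cut(S)$, while the remaining terms $-\alpha \mathcal{O}_R(S) + \alpha\vol(R)$ do not depend on the edge set at all. Next I would show that when $S_L \subseteq L_C$, this inequality is tight at $S = S_L$. Indeed, any edge in $E$ with an endpoint in $S_L$ must also lie in $E_L$, since its endpoint in $S_L$ is edge-complete (has full degree in $L$). In particular every edge of $E$ crossing the cut $(S_L, \bar{S_L})$ lies in $E_L$, so $\cut_L(S_L) = \cut(S_L)$ and hence $f_L^\alpha(S_L) = f_G^\alpha(S_L)$.

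Combining these two facts gives the result: for an arbitrary $T \subseteq V$,
\begin{equation*}
f_G^\alpha(S_L) \;=\; f_L^\alpha(S_L) \;\le\; f_L^\alpha(T) \;\le\; f_G^\alpha(T),
\end{equation*}
where the middle inequality uses that $S_L$ minimizes $f_L^\alpha$. Thus $S_L$ also minimizes $f_G^\alpha = f_{R,\varepsilon}^\alpha$, which is the desired conclusion.

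There is no real obstacle here; the lemma is essentially a bookkeeping statement that formalizes the standard locality principle for minimum cuts. The only subtle point to state carefully is the direction of the inequality between $f_L^\alpha$ and $f_G^\alpha$, and the fact that edge-completeness of \emph{just the nodes in $S_L$} (rather than of the entire vertex set touched by the algorithm) already suffices to certify that no boundary edges are missing from $L$.
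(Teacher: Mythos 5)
Your proposal is correct and follows essentially the same argument as the paper: establish $f_L^\alpha(S)\le f_G^\alpha(S)$ for all $S$ from $E_L\subseteq E$, observe that edge-completeness of the nodes in $S_L$ forces $\cut_L(S_L)=\cut(S_L)$ and hence equality of the two objectives at $S_L$, and conclude by a short chain of inequalities. The paper phrases the conclusion as a sandwich that collapses to equality rather than comparing against an arbitrary $T$, but this is only a cosmetic difference.
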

\begin{proof}
	Because $E_L \subseteq E$, $\cut_L(S) \leq \cut(S)$ for all $S\subset V$, and therefore $f_L(S) \leq f_G(S)$ for all $S\subset V$, which implies that $\min_S \, f_L(S) \leq \min_S \, f_G(S)$. For the specified set $S_L$, since $S_L \subseteq L_C$, all nodes in $S_L$ have the same degree in $L$ as well as $G$, implying that $\cut_L(S_L) = \cut(S_L)$. Therefore:
	\[ f_L(S_L) = f_G(S_L) \geq \min_S f_G(S) \geq \min_S f_L(S) = f_L(S_L) \]
	so equality holds throughout and $S_L$ is optimal for both $f_L$ and $f_G$.
\end{proof}
Our meta-procedure for minimizing~\eqref{mincut} in strongly local time operates by repeatedly solving objective~\eqref{localmincut} over a sequence of growing local subgraphs. 
This proceeds until an iteration in which the current subgraph $L$ is large enough so that the set minimizing~\eqref{localmincut} is made up of edge-complete nodes, at which point we know by Lemma~\ref{lem:local} that we have globally solved objective~\eqref{mincut}. The full procedure is given in Algorithm~\ref{alg:meta}.
\begin{algorithm}
	\caption{\alg{Local Min-Cut Meta-Procedure}}
	\label{alg:meta}
	\begin{algorithmic}
		\STATE {\bfseries Input:} graph $G$, seed set $R$, parameters $\alpha, \varepsilon$, $\vp$
		\STATE Initialize $L$: $L_C = R$, $L_I = \bar{R}$
		\STATE $E_L$: all edges in $E$ with at least 1 endpoint in $R$.
		\REPEAT
		\STATE \textbf{1. Solve Local Objective on $L$} 
		\STATE $S_L = \arg\min_S \, f_L^\alpha(S)$
		\STATE $N = S_L\cap L_I$ (new nodes to explore around)
		\STATE \textbf{2. Expand $L$ around $N$}
		\FORALL{$v \in N$}
		\STATE $E_v = \text{edges incident to node $v$ in $G$}$
		\STATE $E_L \leftarrow E_L \cup E_v$
		\STATE $L_C \leftarrow L_C \cup\{v\}$
		\STATE $L_I \leftarrow L_I - v$.
		\ENDFOR
		\STATE $L \leftarrow (V,E_L)$
		\UNTIL{$N = \emptyset$}
	\end{algorithmic}
\end{algorithm}

The following result proves that the size of the largest subgraph formed by Algorithm~\ref{alg:meta} will be bounded in terms of $\vol(R)$, and $\varepsilon$. This mirrors a result for our previously developed meta-procedure~\cite{VeldtGleichMahoney2016}, which required explicit computation of flows in order to solve an objective related to~\eqref{mincut}. The proof technique is very similar, though in order to avoid explicit computation of flows, more analysis is needed to prove the theoretical bound.
\begin{theorem}
	\label{thm:volbound}
	Let $\alpha$ be chosen so that $\pi_R(S_0) = \alpha$ for some $S_0 \subset V$. The largest subgraph $L$ formed by Algorithm~\ref{alg:meta} satisfies the following volume bound:
	\[ \vol(L) = 2|E_L| \leq \vol(R) \left(1 + 1/\varepsilon\right) + \cut(R). \]
\end{theorem}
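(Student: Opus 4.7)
The plan is to bound $\vol(L) = 2|E_L|$ by decomposing it as $\vol(L_C) + \cut(L_C)$, where $L_C$ denotes the final set of edge-complete nodes (using that $E_L$ is precisely the set of edges in $G$ incident to $L_C$), and then controlling each piece using the structure of the minimum cut solved at each iteration.

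First I would establish a per-iteration bound on the local min cut value. Since $\pi_R(S_0) = \alpha$ for some $S_0$, plugging in gives $f_G^\alpha(S_0) = \alpha \vol(R)$, and because $\cut_{L'}(S) \leq \cut(S)$ for any subgraph $L' \subseteq G$, this transfers to the local problem: $\min_S f_{L'}^\alpha(S) \leq \alpha \vol(R)$ at every iteration. Rearranging the definition of $f_L^\alpha$ at the minimizer $S_L^{(k)}$ yields
\[
\cut_L(S_L^{(k)}) + \alpha\varepsilon \vol(\bar R \cap S_L^{(k)}) \leq \alpha \vol(R \cap S_L^{(k)}),
\]
so in particular $\vol(\bar R \cap S_L^{(k)}) \leq \vol(R \cap S_L^{(k)})/\varepsilon \leq \vol(R)/\varepsilon$ at every iteration.

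The central structural claim is a monotonicity statement: the accumulated explored $\bar R$-nodes lie inside the current min-cut source side, i.e., $L_C^{(k)} \cap \bar R \subseteq S_L^{(k)}$ for every $k$. I would prove this by induction on $k$; the base case $L_C^{(0)} \cap \bar R = \emptyset$ is trivial, and the inductive step uses $N^{(k)} \subseteq S_L^{(k)} \cap \bar R$ by definition of $N^{(k)}$, together with the fact that previously added nodes stay on the source side of the next iteration's min cut. That last fact is the crux and follows from a monotonicity property for source-maximal $s$-$t$ min cuts under edge additions whose endpoints lie inside the source side, which I would establish using submodularity and the lattice structure of min cuts. Given this, at termination $\vol(L_C \cap \bar R) \leq \vol(\bar R \cap S_L) \leq \vol(R)/\varepsilon$, where $S_L$ denotes the final min cut, and hence $\vol(L_C) \leq \vol(R)(1 + 1/\varepsilon)$.

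For $\cut(L_C)$, I partition its boundary edges into those having an $R$-endpoint (contributing at most $\cut(R)$) and those entirely in $\bar R$, going from $L_C \cap \bar R$ to $\bar R \setminus L_C$. Since termination forces $S_L \subseteq L_C$, we have $\bar R \setminus L_C \subseteq V \setminus S_L$; combined with the monotonicity claim, these $\bar R$-to-$\bar R$ boundary edges cross $S_L$ and are therefore bounded by $\cut_L(S_L)$. Substituting the per-iteration bound $\cut_L(S_L) \leq \alpha \vol(R \cap S_L) - \alpha\varepsilon \vol(L_C \cap \bar R)$ together with $\vol(L_C \cap \bar R) \leq \vol(R \cap S_L)/\varepsilon$ makes the $\alpha$-dependent contributions cancel algebraically, delivering $\vol(L) \leq \vol(R)(1 + 1/\varepsilon) + \cut(R)$. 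The main obstacle is the monotonicity claim, which the paper emphasizes must be established without explicit max-flow computations; once that structural invariant is in place, the remaining counting and cancellation are mechanical.
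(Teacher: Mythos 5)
Your overall accounting ($\vol(L) = \vol(L_C) + \cut(L_C)$, the per-iteration inequality $\cut_L(S_L) + \alpha\varepsilon\vol(\bar R\cap S_L) \le \alpha\vol(R\cap S_L)$, and the reduction of everything to a bound on $\vol(L_C\cap\bar R)$) is reasonable, but the structural claim you rest it on --- that $L_C^{(k)}\cap\bar R\subseteq S_L^{(k)}$ for every $k$ --- is false, and no lattice or submodularity argument will rescue it. The min-cut monotonicity results you have in mind apply to monotone changes of \emph{terminal} capacities; here each expansion step adds edges between non-terminal nodes, one endpoint of which lies in the current source side while the other may lie on the sink side. Such an addition raises the cost of every cut separating the two endpoints and can therefore eject a previously explored node from the source side. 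Concretely: let $v\in\bar R$ have degree $2$, with one edge to a node $r\in R_s$ and one edge to $w\in\bar R$ with $d_w=2$, and take $\alpha\varepsilon=0.45$. In the first iteration only the edge $(r,v)$ is present, so placing $v$ on the source side costs $\alpha\varepsilon d_v = 0.9 < 1$ and $v$ becomes edge-complete. Once the edge $(v,w)$ is revealed, keeping $v$ on the source side costs $0.9 + \min(1,\alpha\varepsilon d_w) = 1.8$, versus $1$ for cutting $(r,v)$, so the unique min cut now places $v$ on the sink side. Hence explored $\bar R$-nodes need not remain in $S_L^{(k)}$, and your bounds on both $\vol(L_C\cap\bar R)$ and the $\bar R$-to-$\bar R$ boundary edges lose their justification.

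The paper's proof circumvents exactly this difficulty by tracking a quantity that \emph{is} monotone: it builds an implicit maximum flow on each $L_{st}^{(i)}$ by augmenting the previous one, observes that the terminal edge $(v,t)$ of each newly completed node is saturated because it lies in the current min cut, and notes that an augmenting $s$-$t$ path can never use an arc leaving $t$, so these edges stay saturated forever even if $v$ later migrates to the sink side. Summing the saturated sink capacities gives $\alpha\varepsilon\vol(\mathcal{N})\le M\le\alpha\vol(R)$, which is precisely the bound on the volume of the explored $\bar R$-nodes that you need; the edge-incomplete fringe is then handled by a counting step similar to yours. If you replace your containment claim with this saturation invariant (or some other monotone certificate), the rest of your argument can be repaired; as written, the proof has a genuine gap at its central step.
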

\begin{proof}

	Recall that the global objective~\eqref{mincut} we wish to solve is exactly the minimum $s$-$t$ objective on an auxiliary graph $G_{st}$, whose construction we outlined in Section~\ref{minspc}. The localized objective~\eqref{localmincut} is the minimum $s$-$t$ cut objective on a subgraph $L_{st}$ of $G_{st}$ that contains the same set of terminal edges but only a subset of edges between non-terminal nodes. Although Algorithm~\ref{alg:meta} does not require an explicit computation of a maximum $s$-$t$ flow, we will show a bound on $\vol(L)$ by considering an implicit maximum $s$-$t$ flow with special properties that exists by the max-flow/min-cut duality theorem. 
	
	\paragraph{Notation for Proof.} Let $L^{(i)}$ denote the local graph at the $i$th iteration of Algorithm~\ref{alg:meta}, $f_i = f_{L_i}^\alpha$ be shorthand for objective function~\eqref{localmincut}, and $S_i = \arg\min f_i(S)$. Use $N_i$ to denote the set of nodes which become edge-complete in the $i$th iteration, which by design are all in $\bar{R}$. $L^{(i)}$ itself is a subgraph of $G$, and we use $L_{st}^{(i)}$ to denote the subgraph of $G_{st}$ whose minimum $s$-$t$ cut we compute at iteration $i$. Let $C^{(i)}$ denote the set of edges in $L_{st}^{(i)}$ that are cut at iteration $i$. 
	
	
	\paragraph{Constructing Implicit Flows.} By the min-cut/max-flow theorem, the value of the maximum flow on the graph $L_{st}^{(i)}$ equals the weight of the cut $C^{(i)}$ which we compute in practice. Furthermore, any maximum flow which we could compute will saturate all of the edges in $C^{(i)}$. Using this observation we will show by construction that when Algorithm~\ref{alg:meta} terminates after some iteration $k$, there exists a maximum flow $F$ on $L_{st}^{(k)}$ which saturates all edges between edge-complete nodes and the sink. 
	
	In the first iteration, $N_1$ is exactly the set of nodes whose edge to the sink is cut. Let $F_1 = (f_{ij})$ be some maximum $s$-$t$ flow on $L_{st}^{(1)}$, and note that it will saturate all edges from $N_1$ to $t$. In the next iteration we compute a new minimum cut $C^{(2)}$. The set $N_2$ represents all nodes whose edge to $t$ was included in $C^{(2)}$ but not $C^{(1)}$. Since $L_{st}^{(1)}$ is a strict subgraph of $L_{st}^{(2)}$, we could in theory find a maximum $s$-$t$ flow $F_2$ on $L_{st}^{(2)}$ by starting with the previous flow $F_1$ and continually finding new augmenting flow paths until no more flow can be routed from $s$ to $t$. Since $F_2$ is a maximum flow, it must saturate all edges from $N_2$ to $t$, since these were cut by $C^{(2)}$. Furthermore, we can assume that in the construction of $F_2$, all the edges from $N_1$ to $t$, which were saturated by $F_1$, will remain saturated, since no improvement can be gained by rerouting flow from the sink back to $N_1$. Proceeding by induction, we see that at iteration $i$ we can find some maximum $s$-$t$ flow which saturates all edges from $N_i$ to the sink (since these were cut by $C^{(i)}$), and also saturates all terminal edges of previous edge-complete nodes in $\bar{R}$. We conclude that when Algorithm~\ref{alg:meta} terminates, the maximum flow, and hence the minimum cut, will be bounded below by the weight of edges from edge-complete nodes to $t$. Each edge is of weight $\alpha \varepsilon d_v$ for the edge-complete node $v \in \bar{R}$. Thus $ \alpha \varepsilon \vol(\mathcal{N}) < M$ where $M$ is the minimum cut value, and $\mathcal{N}$ is the set of edge-complete nodes.
	
	\paragraph{Bounding Min-Cut Above.} Next we bound $M$ from above. In the statement of the theorem we assumed that that $\pi_R(S_0) = \alpha$ for some $S_0 \subset V$ (which will always be true if we use Algorithm~\ref{alg:meta} as a subroutine for Algorithm~\ref{alg:wrapper}). This is equivalent to the statement that
	\[ f_{R,\varepsilon}^\alpha(S_0) = \cut(S_0) - \alpha \mathcal{O}_{R}(S_0) + \alpha \vol(R) = \alpha \vol(R), \]
	so we have an upper bound of $\alpha\vol(R)$ on $\min_S f_{R,\varepsilon}^\alpha(S)$. Combining upper and lower bounds, we see that
	\[\alpha \varepsilon \vol(\mathcal{N}) \leq \alpha \vol(R) \implies \vol(\mathcal{N}) \leq \vol(R)/\varepsilon. \]

	\paragraph{Bounding Volume of $L$.} The largest local graph $L$ that we form is made up of $\caln$, all of $R$, and a few additional nodes in $\bar{R}$ that have non-zero degree in $L$ but remained edge-incomplete during the entire course of the algorithm. Use $P$ to denote these edge-incomplete nodes, and note that they share no edges with each other, but only share edges with $R$ and $\caln$. Thus, 
	\begin{align*}
	 \vol(P) &\leq (\text{number of edges from $R$ to $P$})  + (\text{number of edges from $\caln$ to $P$}) \\
				 & \leq \cut(R) + \vol(\caln).
	\end{align*}
	Thus the full volume bound follows:
	\begin{align*}
	\vol(L) &\leq \vol(R) + \vol(\caln) + \vol(P) \\
		 &\leq \vol(R) + 2 \vol(\caln) + \cut(R) \\
		&  \leq \vol(R) ( 1 + 2/\varepsilon) + \cut(R). 
	\end{align*}
\end{proof}

The volume bound given here is the same as the bound shown for our previous method \alg{SimpleLocal}. Thus, using Algorithm~\ref{alg:meta} as a subroutine in Algorithm~\ref{alg:wrapper} produces an algorithm with the same theoretical runtime as \alg{SimpleLocal}, despite solving a much more general objective function and completely avoiding any explicit maximum flow computations. Each flow problem takes at most $O(\vol(R)^3/\varepsilon)$ operations if fast flow subroutines are used~\cite{Orlin-2013-max-flow}, and Theorem~\ref{thm:cutR} guarantees it will be run at most $\cut(R)$ times. This runtime bound is very conservative and the empirical performance will typically be significantly better.

\section{The Push-Relabel Implementation}
\label{pr-implementation}
We implement Algorithm~\ref{alg:meta} using a new method for computing minimum $s$-$t$ cuts based on a variant of the push-relabel algorithm of Goldberg and Tarjan~\cite{goldberg1988new}. The full push-relabel algorithm can be separated into two phases: the first phase computes a maximum preflow which can be used to solve the minimum $s$-$t$ cut problem, and the second phase performs additional computations to turn the preflow into a maximum $s$-$t$ flow. Because we only require minimum $s$-$t$ cuts, our method simply applies Phase 1. 

\subsection{Push-Relabel Overview}
Section~\ref{maxflow} provides a basic overview of flow computations. The push-relabel algorithm is specifically a \emph{preflow} algorithm for maximum flows, meaning that during the course of the algorithm, all arcs satisfy capacity constraints, but each node $i$ is allowed to have more incoming flow than outgoing flow, i.e.\ a preflow satisfies a relaxation of the the flow constraints:
\begin{align*}
{ \sum_{(j,i) \in A} f_{ji} \leq \sum_{(i,k) \in A} f_{ik} \text{ for $i \in V$}}
\end{align*}
where $F = (f_{ij})$ is a flow assignment for a directed graph $G$ with node set $V$ and arc set $A$. Push-relabel maintains a labeling function $\ell : V \rightarrow \{0,1,2, \hdots , n \}$ where $n = |V|$ is the number of nodes in a graph $G_{st}$ with distinguished source and sink. 
%
The algorithm can be initialized using any preflow and a labeling that gives a lower bound on the distance from each node to the sink in the residual graph. The standard initialization is to set $\ell(s) = n$ and the label of all other nodes to zero. The preflow is initialized to be zero on all edges, and afterwards all edges from $s$ to its neighbors are saturated. This creates a positive \emph{excess} at these neighbors, i.e.\ more flow goes into the nodes than out. After initialization, the algorithm repeatedly visits \emph{active} nodes, which are nodes that have a label less than $n$ and a positive excess. For a selected active node $u$, the algorithm locally pushes flow across admissible edges, which are defined to be edges $(u,v)$ for which $\ell(u) = \ell(v) + 1.$ If no admissible edges exist, the label of the node is increased to be the minimum label such that an admissible arc is created. During the course of the algorithm, it can be shown that $\ell(u) < \ell(v)$ for any arc $(u,v)$ with nonzero residual capacity, and furthermore $\ell(v)$ is a lower bound on the distance from node $v$ to the sink $t$, if there still exists a path of unsaturated edges from $v$ to $t$. Phase 1 of the algorithm is complete when there are no more active nodes to process. At this point the preflow is at a maximum, and the set of nodes with label $n$ forms the minimum cut set.

\subsection{Label Selection Variants and Relabeling Heuristics}
The generic push-relabel algorithm simply requires one to push flow across admissible edges whenever there still exist active nodes. This procedure is guaranteed to converge to the solution to the minimum cut problem, but better runtimes can be obtained by more carefully selecting the order in which to process active nodes. One approach is the first-in-first-out (FIFO) method, which begins by pushing all initial active nodes into a queue, and adding new nodes to the queue as they become active. Another approach is to continually select the highest-labeled node at each step. 

The push-relabel method can be made very fast in practice using efficient relabeling heuristics~\cite{Cherkassky1997}. One simple but very effective heuristic is to periodically run a breadth first search from the sink node $t$ and update the labels of each node to equal the distance from that node to $t$. Another heuristic is the gap relabeling heuristic, which checks whether there exist certain types of gaps in the labels that can be used to prove when certain nodes are no longer connected to the sink node $t$.

\subsection{Implementation Details and Warm-Start Heuristic}
In practice we implement the FIFO push-relabel algorithm in the Julia programming language and make use of the global relabeling heuristic. Although implementations of push-relabel in other languages have made efficient use of the highest-label variant and the gap relabeling heuristic~\cite{Cherkassky1997}, these require slightly more sophisticated data structures that are more challenging to maintain in Julia. Our implementation choices make it possible to maintain a very simple but efficient method to implement Algorithm~\ref{alg:meta}. Running this procedure for various $\alpha$ using Algorithm~\ref{alg:wrapper} provides a fast local graph clustering algorithm. Because our method is flow-based and puts a higher emphasis on including seed nodes, we refer to it as \alg{FlowSeed}.

An important part of our implementation of Algorithm~\ref{alg:meta} is a warm-start heuristic for computing consecutive minimum $s$-$t$ cuts on the growing subgraph. Each local subgraph $L$ corresponds to a local cut graph $L_{st}$ with added source and sink nodes. For the first local cut graph, we use the standard initialization for push-relabel, i.e. start with a preflow of zero and saturate all edges from $s$ to its neighbors. Applying push-relabel will return a maximum preflow $F$ on $L_{st}$, and thus a minimum $s$-$t$ cut which we use to update $L$ as outlined in Section~\ref{sl-meta}. After $L$ and $L_{st}$ are updated, the goal is to find an updated minimum cut, which can be accomplished up finding an updated maximum preflow. Note that $F$ is no longer a maximum preflow on the updated $L_{st}$, since we have added new nodes and edges to $L_{st}$ and hence there are new ways to route flow from $s$ to $t$. However, $F$ will still be a valid preflow. Our warm-start procedure therefore initializes the next run of the push-relabel method with the preflow $F$, and sets the label of each node to be its distance to the sink in the corresponding residual graph. Initializing each consecutive maximum preflow computation in this way will be much more efficient than re-constructing $L_{st}$ from $L$ at each step and starting with a preflow of zero.
\section{Experiments}
We demonstrate the performance of \alg{FlowSeed} in several community detection experiments and large scale 3D image segmentation problems. Code for our method and experiments is available online at~\url{https://github.com/nveldt/FlowSeed}.

%
%
%
%

\subsection{Local Community Detection}
\label{cd}
Our first experiment demonstrates the robustness of \alg{FlowSeed} in local community detection, thanks to its ability to penalize the exclusion of certain seed nodes from the output. 

\paragraph{Datasets}
We consider four graphs from the SNAP repository~\cite{snapnets}: DBLP, Amazon, LiveJournal, and Orkut. Each network come with sets of nodes representing so-called ``functional communities''~\cite{Yang2015}. Communities in these networks specifically represent user groups in a social network (LiveJournal and Orkut), product categories (Amazon), or academic publication venues (DBLP). For each graph we select the ten largest communities out of the top 5000 communities identified by Yang and Leskovec~\cite{yangComm}, which come with the data on the SNAP website. These communities range in size from a few hundred to a few thousand nodes. The size of each network in terms of nodes and edges is given in Table~\ref{snap-stats}, along with the average community size and conductance among the ten largest communities in each network. 
\begin{table}[h]
	\caption{Number of nodes and edges for SNAP networks, along with target community size $|T|$ and target community conductance $\phi(T)$, averaged over the largest 10 communities.}
	\label{snap-stats}
	\centering
	\begin{tabular}{lllll}
		\toprule
		Graph & $|V|$  & $|E|$ & $|T|$ & $\phi(T)$  \\
		\midrule
 DBLP & 317,080 & 1,049,866 & 3902 & 0.4948 \\
 Amazon & 334,863 & 925,872 & 190 & 0.0289 \\
 LiveJournal & 3,997,962 & 34,681,189 & 988 & 0.4469 \\
 Orkut & 3,072,441 & 117,185,083 & 3877 & 0.6512 \\
		\bottomrule
	\end{tabular}
\end{table} 

\paragraph{Strongly-Local Algorithms}
We compare our Julia implementation of \alg{FlowSeed} against several other standard local graph clustering algorithms that come with strong locality guarantees:\newline

\noindent
\alg{Push}: The random-walk diffusion method of Andersen et al.~\cite{AndersenChungLang2006}. We use a highly optimized C++ implementation of the algorithm with a MATLAB interface. This method relies on a PageRank teleportation parameter $\alpha_{pr}$ and a tolerance parameter $\varepsilon_{pr}$. The latter controls the accuracy to which the underlying PageRank problem has been solved, and implicitly controls how wide of a region is explored in the graph by the method.\newline 

\noindent
\alg{HK-relax}: The heat kernel diffusion method of Kloster and Gleich~\cite{Kloster-2014-hkrelax}. This comes with diffusion parameter $t$ and a tolerance parameter $\varepsilon_{hk}$. We use the C++ implementation (with MATLAB interface) provided by the original authors, available online at~\url{https://github.com/kkloste/hkgrow}. \newline

\noindent 
\alg{CRD}: The capacity releasing diffusion of Wang et al.~\cite{pmlr-v70-wang17b}, implemented as a part of the LocalGraphClustering package~\url{https://github.com/kfoynt/LocalGraphClustering}. For this method we must set parameters $U$, $h$, and $w$ (see the original work for details). \newline

\noindent
\alg{SimpleLocal}: Our previous strongly-local flow-based method which optimizes the localized conductance objective~\eqref{local-cond} for a seed set $R$ and locality parameter $\varepsilon$. We use the fast C++ implementation available from the LocalGraphClustering package. \newline

\paragraph{Seed Set and Algorithm Parameters}
For each target community in each network, we randomly select 5\% of the target nodes, which we refer to as the \emph{starter} nodes. We grow the starter nodes by their neighborhood to produce a seed set $R$ that we use as input for each algorithm. For \alg{HK-relax}, \alg{Push}, and \alg{CRD}, we also tried using the starter set as the seed set, but this was not as effective in practice. Similarly, for these three methods we tried using each one of the starter nodes one at a time as an individual seed node, taking the best conductance output as the result, but this also was ineffective. Therefore, we only report results for each method using the full seed set $R$.

For both \alg{SimpleLocal} and \alg{FlowSeed} we use a locality parameter of $\varepsilon = 0.1$. We require \alg{FlowSeed} to strictly include the known 5\% of the target community, but do not add any soft penalties on excluding other seed nodes. For \alg{Push}, we set $\alpha_{pr} = 0.99$, and test a range of tolerance parameters: $\varepsilon_{pr} \in \{10^{-2}, 10^{-3}, 10^{-4}, 10^{-5}, 10^{-6}, 10^{-7} \}$, returning the output with the best conductance. For \alg{HK-relax}, following the experiments of Kloster and Gleich~\cite{Kloster-2014-hkrelax}, we test several pairs of parameter settings: $(t,\varepsilon_{hk})$ = $(5, 10^{-4})$; $(10, 10^{-4} )$; $(20, 10^{-3} )$; $(40, 10^{-3} )$; $(80, 10^{-2} )$, again returning the lowest conductance output. We also experimented with smaller values for locality and tolerance parameters~$\varepsilon$, $\varepsilon_{pr}$, and $\varepsilon_{hk}$, all of which control how much of the graph is explored by their respective method. However, this only increased the runtime of these methods without yielding improved results. This is consistent with previous research that has shown that real-world networks often exhibit very good community structure at small scales, but do not tend to possess large sets with good topological community structure~\cite{leskovec2008statistical}. Thus exploring the graph using \alg{SimpleLocal}, \alg{FlowSeed}, \alg{HK-relax}, or \alg{Push} with a smaller tolerance or locality parameter is unlikely to return better results. Finally, for \alg{CRD} we use the default parameters $U = 3$, $h = 10$, and $w = 2$; we were unable to determine other parameter settings that led to consistently improved output in practice. 

\paragraph{Results}
In Table~\ref{tab:snap}, we report conductance, cluster size, precision, recall, F1 scores, and runtimes for each method, averaged over the 10 communities for each network. \alg{FlowSeed} returns the best result among all methods on three of four datasets. Furthermore, it always outperforms \alg{SimpleLocal}, which solves a very similar objective but does not penalize the exclusion of seed nodes. The relative performance of all methods other than \alg{FlowSeed} varies significantly depending on the dataset. As expected, in many cases \alg{SimpleLocal} discards too many seed nodes, returning sets with very good conductance and precision, at the expense of poor recall. This is exhibited most clearly on the DBLP dataset, and to a lesser extent on Amazon. On DBLP, \alg{HK-relax} and \alg{Push} also exhibit a tendency to overemphasize conductance and output tiny sets with low recall. On Orkut, \alg{Push} grows sets that are too large, which is another tendency of the method that has been documented in other work as well~\cite{Kloster-2014-hkrelax,VeldtGleichMahoney2016}. \alg{CRD} outperforms all other methods on LiveJournal and does reasonably well (relative to the other methods) on both Orkut and DBLP. However, it returns results that are significantly worse than all other algorithms on Amazon.
\begin{table}[h]
	\caption{Average set size, conductance $\phi$, runtime (in seconds), precision, recall, and F1 score for five methods on four networks. Best F1 scores are shown in bold.}
	\label{tab:snap}
	\vspace{-.5\baselineskip}
	\centering
	\begin{tabular}{llllllll}
		\toprule
		Graph &  method & size & $\phi$ & runtime & prec. & recall & F1 \\
 \midrule 
 DBLP & \alg{HK-relax} & 280 & 0.100 & 0.110 & 0.609 & 0.036 & 0.044 \\
 & \alg{Push} & 80 & 0.130 & 0.168 & 0.607& 0.011 & 0.022 \\
 & \alg{CRD}  & 1460 & 0.255 & 3.569 & 0.468& 0.190 & 0.263 \\
 & \alg{SimpleLocal} & 31 & 0.046 & 24.540 & 0.632& 0.006 & 0.011 \\
 & \alg{FlowSeed} & 2789 & 0.254 & 9.491 & 0.414& 0.302 &  \textbf{0.345} \\
 \midrule 
 Amazon & \alg{HK-relax} & 156 & 0.007 & 0.020 & 0.952 & 0.804 & 0.843 \\
 & \alg{Push} & 180 & 0.007 & 0.225 & 0.953& 0.889 & 0.904 \\
 & \alg{CRD}  & 70 & 0.208 & 1.629 & 0.958& 0.374 & 0.521 \\
 & \alg{SimpleLocal} & 154 & 0.007 & 0.096 & 0.906& 0.772 & 0.814 \\
 & \alg{FlowSeed} & 214 & 0.018 & 0.332 & 0.892& 0.970 &  \textbf{0.924} \\
 \midrule 
 LiveJournal & \alg{HK-relax} & 1373 & 0.144 & 0.206 & 0.432 & 0.593 & 0.406 \\
 & \alg{Push} & 1867 & 0.363 & 0.172 & 0.444& 0.650 & 0.489 \\
 & \alg{CRD}  & 3230 & 0.098 & 69.584 & 0.464& 0.782 & \textbf{0.520} \\
 & \alg{SimpleLocal} & 4485 & 0.035 & 17.932 & 0.371& 0.813 & 0.440 \\
 & \alg{FlowSeed} & 4931 & 0.070 & 22.780 & 0.395& 0.896 & 0.484 \\
 \midrule 
 Orkut & \alg{HK-relax} & 3540 & 0.648 & 3.748 & 0.103 & 0.198 & 0.084 \\
 & \alg{Push} & 16790 & 0.749 & 0.767 & 0.165& 0.706 & 0.267 \\
 & \alg{CRD}  & 4006 & 0.355 & 451.092 & 0.442& 0.457 & 0.428 \\
 & \alg{SimpleLocal} & 3726 & 0.339 & 327.118 & 0.468& 0.451 & 0.439 \\
 & \alg{FlowSeed} & 4049 & 0.379 & 439.327 & 0.505& 0.507 & \textbf{0.494} \\
		\bottomrule
	\end{tabular}
\end{table}

\paragraph{Runtime Comparison of Flow-Based Methods}
In Table~\ref{tab:snap} we see that \alg{Push} and \alg{HK-relax} are by far the fastest local clustering algorithms, taking only a fraction of a second in almost all cases. However, although these methods sometimes return good outputs, they do not consistently perform well across all datasets. Focusing next on the two flow-based methods, we see that \alg{SimpleLocal} and \alg{FlowSeed} trade off in runtime for the experiments summarized in Table~\ref{tab:snap}. However, the difference in runtime is greatly influenced by the fact that \alg{FlowSeed} solves a slightly different objective in order to ensure certain seed nodes are included in the output. In order to provide a clearer runtime comparison between these two related methods, we re-run both algorithms again but do not include any seed exclusion penalties when running \alg{FlowSeed}. In this case the algorithms will solve the same exact objective and return the same output. This time we use a locality parameter that depends on the size of the seed set relative to the graph: $\varepsilon =5\vol(R)/\vol(\bar{R})$. This means that for the larger datasets, computations will not be as local. Therefore, the bottleneck for both of the methods will be their underlying flow subroutines, which are what we are most interested in comparing. The average runtimes for the two algorithms are given in Table~\ref{tab:runtimes}.
%
\begin{table}[t]
	\caption{Average runtimes (in seconds) for \alg{SimpleLocal} and \alg{FlowSeed} with no seed exclusion penalties. In this case the problems optimize the same objective, but \alg{FlowSeed} is faster thanks to our warm-start heuristic and push-relabel flow subroutine. }
	\label{tab:runtimes}
	\centering
\begin{tabular}{l c c c c}
& DBLP & Amazon & LiveJoural & Orkut \\
\toprule
\alg{FlowSeed}&5.4 & 1.3 & 107.8 & 229.3 \\
\alg{SimpleLocal} & 17.6& 3.5 & 134.9  & 632.2 \\
\bottomrule
\end{tabular}
\end{table}
From these results we see that, thanks to our push-relabel implementation and warm start procedure, our Julia implementation outperforms the optimized C++ code for \alg{SimpleLocal}, which relies on Dinic’s maximum flow algorithm as a subroutine and makes no use of warm starts. Thus, while \alg{HK-relax} and \alg{Push} maintain a superior runtime performance in local graph clustering experiments, our work constitutes an improvement in running times for flow-based methods, which in some cases can provide the best community detection results.

Before moving on we make one important comment distinguishing the implementations of \alg{SimpleLocal} and \alg{FlowSeed}. In theory both algorithms, at each step, try to find whether there exists some $S$ with $\phi_{R,\varepsilon}(S) < \alpha$ (or in the case of \alg{FlowSeed}: $\pi_R(S) < \alpha$) for some $\alpha \in (0,1)$. If they succeed, they update $\alpha \leftarrow \phi_{R,\varepsilon}(S)$ (respectively: $\alpha \leftarrow \pi_R(S)$) and repeat the process with a new $\alpha$ (see Algorithm~\ref{alg:wrapper}). However, in practice, the implementation of \alg{SimpleLocal} in the LocalGraphClustering package updates $\alpha$ by computing the standard conductance: $\alpha \leftarrow \phi(S)$. This has the advantages that it sometimes leads to fewer iterations, and guarantees that the final output set will have a \emph{standard} conductance score less than or equal to the minimum \emph{local} conductance score, though the output set may not actual minimize \emph{local} conductance~\eqref{local-cond}. In order to accurately compare the two algorithms, for our runtime experiment we also use the update $\alpha \leftarrow \phi(S)$ in our implementation of \alg{FlowSeed}. However, in all other experiments we do not make this change, since one of the key features of \alg{FlowSeed} is that it looks for sets that not only have low conductance, but also agree as much as possible with the semi-supervised information provided.

\subsection{3D Image Segmentation on a Brain Scan}

Next we turn to detecting target regions in a large graph constructed from a brain MRI. The data is made up of a labeled $256\times 287 \times 256$ MRI obtained from the MICCAI 2012 challenge~\cite{Marcus-2007-oasis}. In previous work~\cite{VeldtGleichMahoney2016} we demonstrated how to convert the image into a nearest neighbors graph on the 3D voxels. Specifically, the MRI has $256 \times 287 \times 256 \approx 18$ million voxels, with each voxel represented by an integer between 0 and 4010. For each voxel we considered its 26 spatially adjacent neighbors, i.e.\ voxels whose indices differed by at most 1 in each of the three spatial dimensions. For adjacent voxels $u$ and $v$, we computed similarities between scan intensities $I_u$ and $I_v$ using the function $e^{-(\sqrt{I_u} - \sqrt{I_v})^2/(0.05)^2}$, similar to the approach of Shi and Malik~\cite{ShiMalik2000}. These similarities were then thresholded at 0.1 and multiplied by 10 to produce a set of weighted edges in the graph with minimum weight 1. In our experiments we therefore perform calculations in terms of weighted degrees and volumes in the graph. The resulting graph has 18 million nodes and around 234 million undirected weighted edges.

The data provided by MICCAI 2012 came with 95 manually labeled regions of the brain (e.g. ventricles, amygdalas, brain stem, hippocampi, lobules, etc.). Each of these maps to a ground truth region of the brain graph. These regions range from 3104 to over 250,000 nodes in size, and with (weighted) conductance scores between 0.04 and 0.25. In our past research we showed results for detecting a single target ventricle of low conductance with around 4000 nodes~\cite{VeldtGleichMahoney2016}. Here we run semi-supervised clustering experiments on all 95 regions. More specifically, we select a set of 17 example regions out of the 95 identified regions of the brain, spanning the full range of sizes. We run extensive experiments using both the \alg{Push} algorithm~\cite{AndersenChungLang2006}, as well as \alg{FlowSeed}. We use results from experiments on the 17 regions to observe the behavior of penalizing the exclusion of seed nodes, and to guide our choice of locality parameters to choose in experiments on the remaining 78 evaluation regions.

\paragraph{Benefit of Seed Exclusion Penalties.} 
\begin{figure}[t]
	\subfloat[F1 score]
	{\includegraphics[width=.5\linewidth]{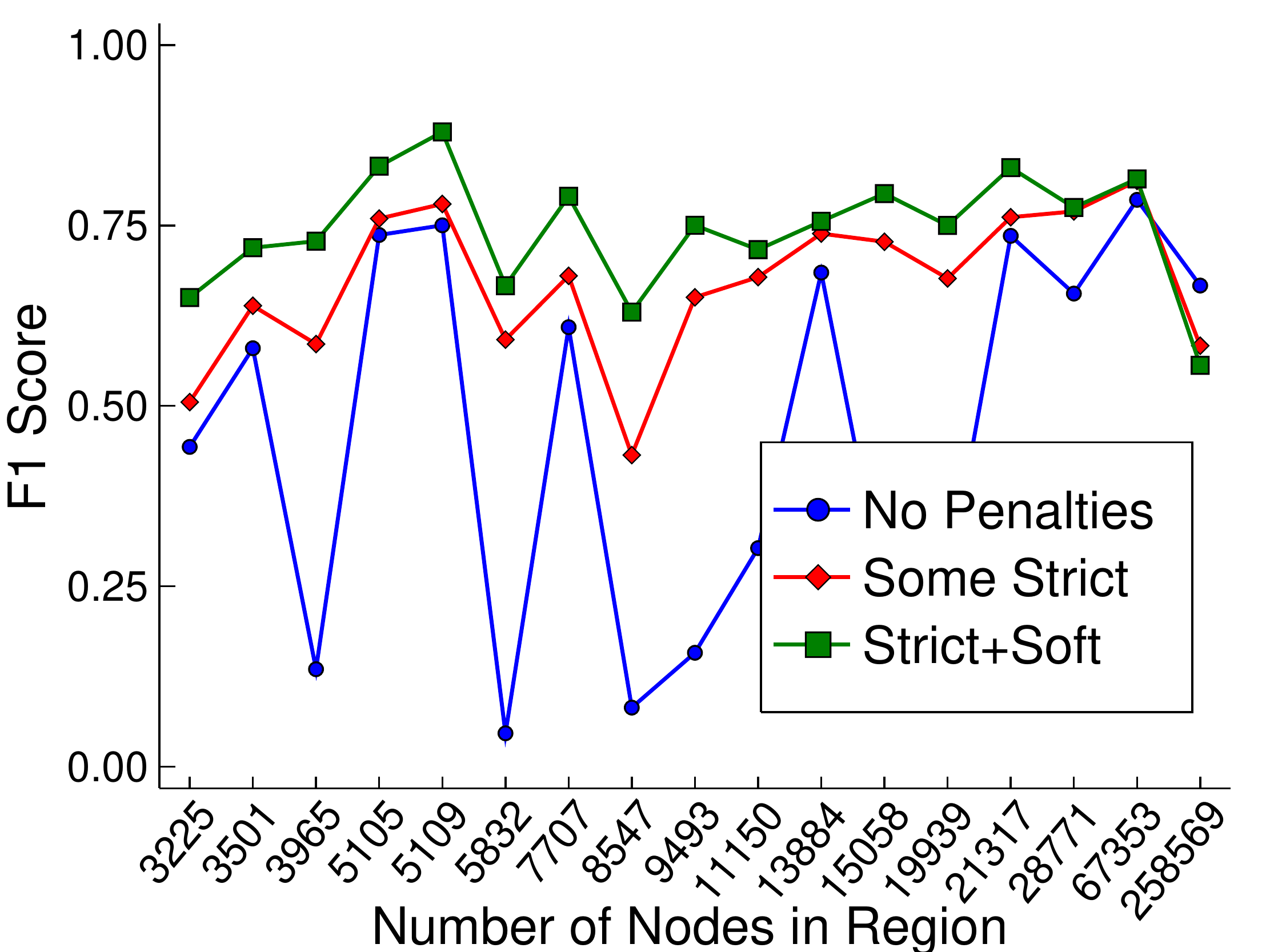}\label{bter}}\hfill
	\subfloat[Recall]
	{\includegraphics[width=.5\linewidth]{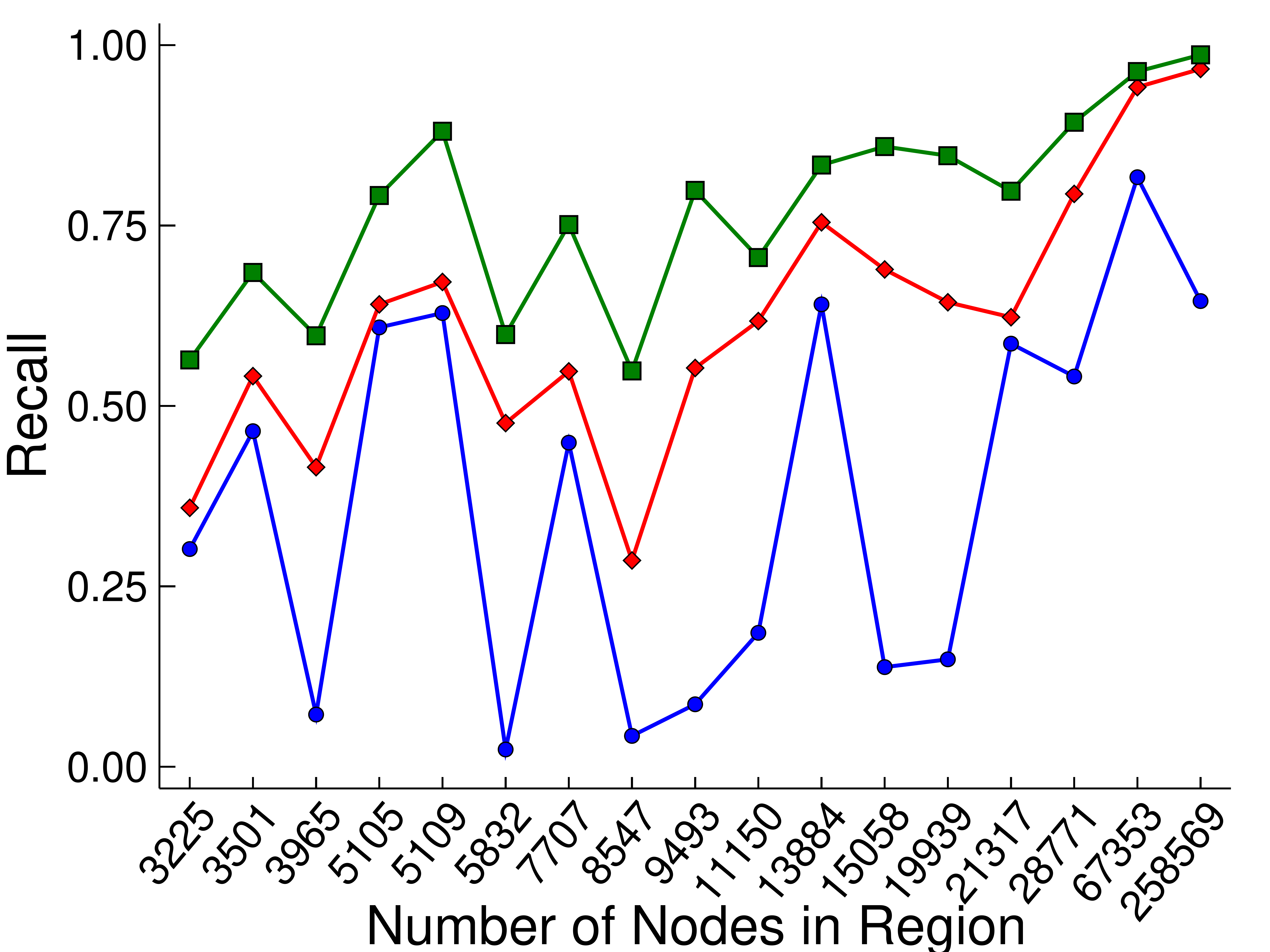}\label{cagrqcb}}
	\hfill
	\vspace{-.5\baselineskip}
	\caption{When detecting target regions of a large brain graph, enforcing no penalties on excluding seed nodes (blue) allows the method to discard too many seeds, often leading to very poor recall. If we enforce strict penalties on excluding known target nodes (red), this significantly improves recall and hence overall detection of the target cluster in terms of F1 score. We see even greater improvement by additionally including a soft penalty of $p_r = 1$ for excluding any other node in the seed set (green).}
	\vspace{-.5\baselineskip}
	\label{fig:penalties}
\end{figure}
We test a range of parameters on the 17 example regions to observe how different locality parameters and seed exclusion penalties behave for different sized regions and seed sets. We test locality parameters $\varepsilon \in \{0.5, 0.25, 0.1, 0.05\}$ and construct seed sets by taking very small subsets of the target cluster and growing them by their neighborhood. We find that with almost no exceptions, including strict and soft seed exclusion penalties leads to significant benefits in ground truth recovery across all region sizes. In Figure~\ref{fig:penalties}, for the 17 example regions we plot the recall and F1 scores for region recovery for a locality parameter of $\varepsilon = 0.1$ in the case where the seed set is made up of a random sample of $2\%$ of the target region, plus the immediate neighbors of these nodes. We run our method with (1) no penalties on excluding seed nodes, (2) strict penalties on excluding the initial $2\%$ of nodes, and (3) strict penalties on the $2\%$ and additionally a soft penalty of $p_r = 1$ for excluding any other seed nodes. As expected, when we include no penalties, the flow-based approach often shrinks the seed set into a small cluster with good conductance and very good precision, but almost no recall. As we increase the strength of seed exclusion penalties, the precision decreases slightly but the recall improves considerably, leading to a much better overall ground truth recovery.

We confirmed in numerous experiments that the same behavior holds for different locality parameters and seed sizes. For each region we form four types of seed sets. For the first we select a random set of 100 nodes from the target region, and for the remaining three we select $1\%$, $2\%$ and $3\%$ of the target region. In all cases, we grow these nodes by a one-hop neighborhood. In Figure~\ref{fig:eps1} we show F1 scores achieved by \alg{FlowSeed} on all four types of seed sets when $\varepsilon = 0.1$. 
\begin{figure}[t!]
	\subfloat[100 Target Nodes]
	{\includegraphics[width=.5\linewidth]{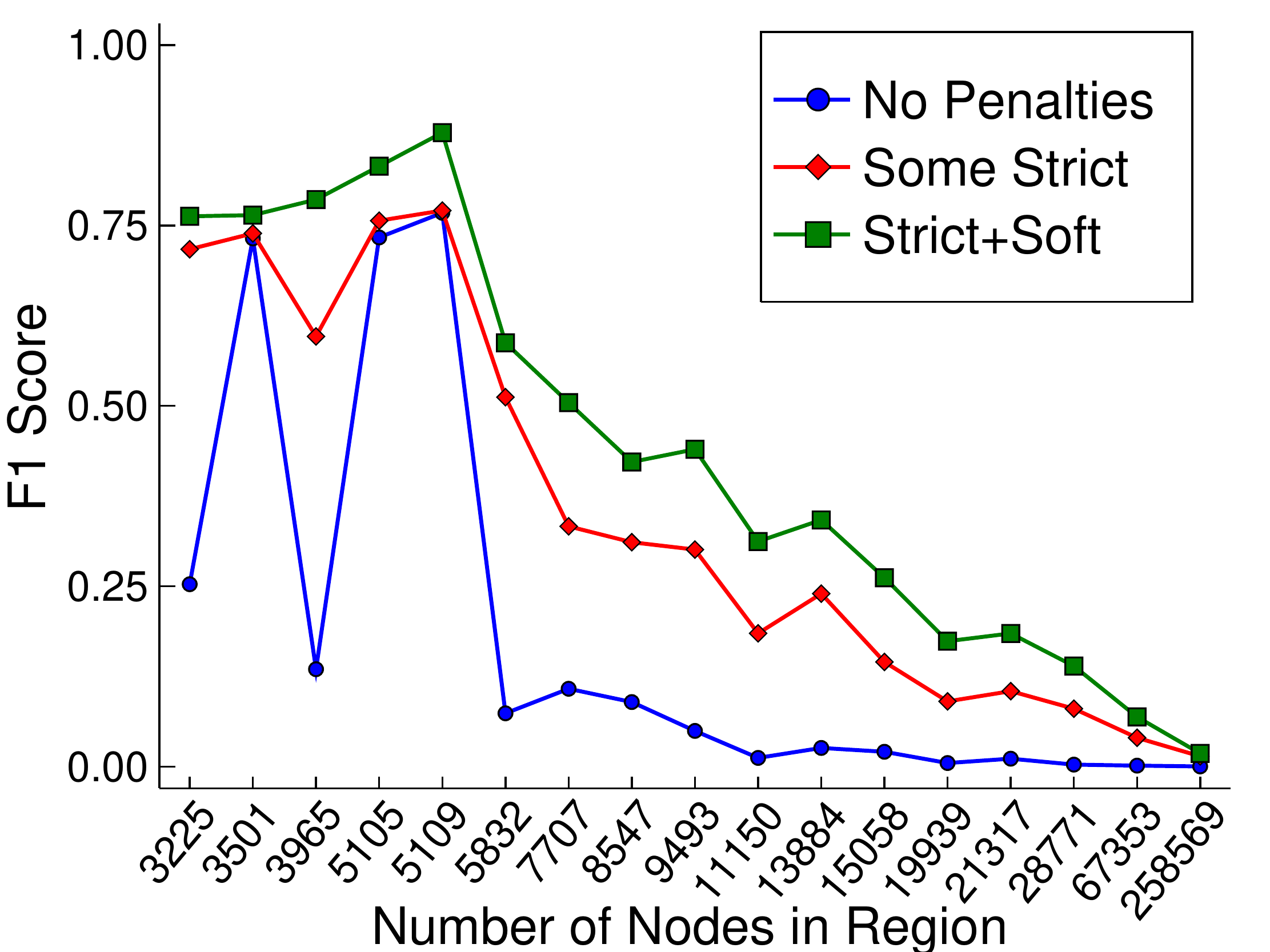}}\hfill
	\subfloat[1\% of Target]
	{\includegraphics[width=.5\linewidth]{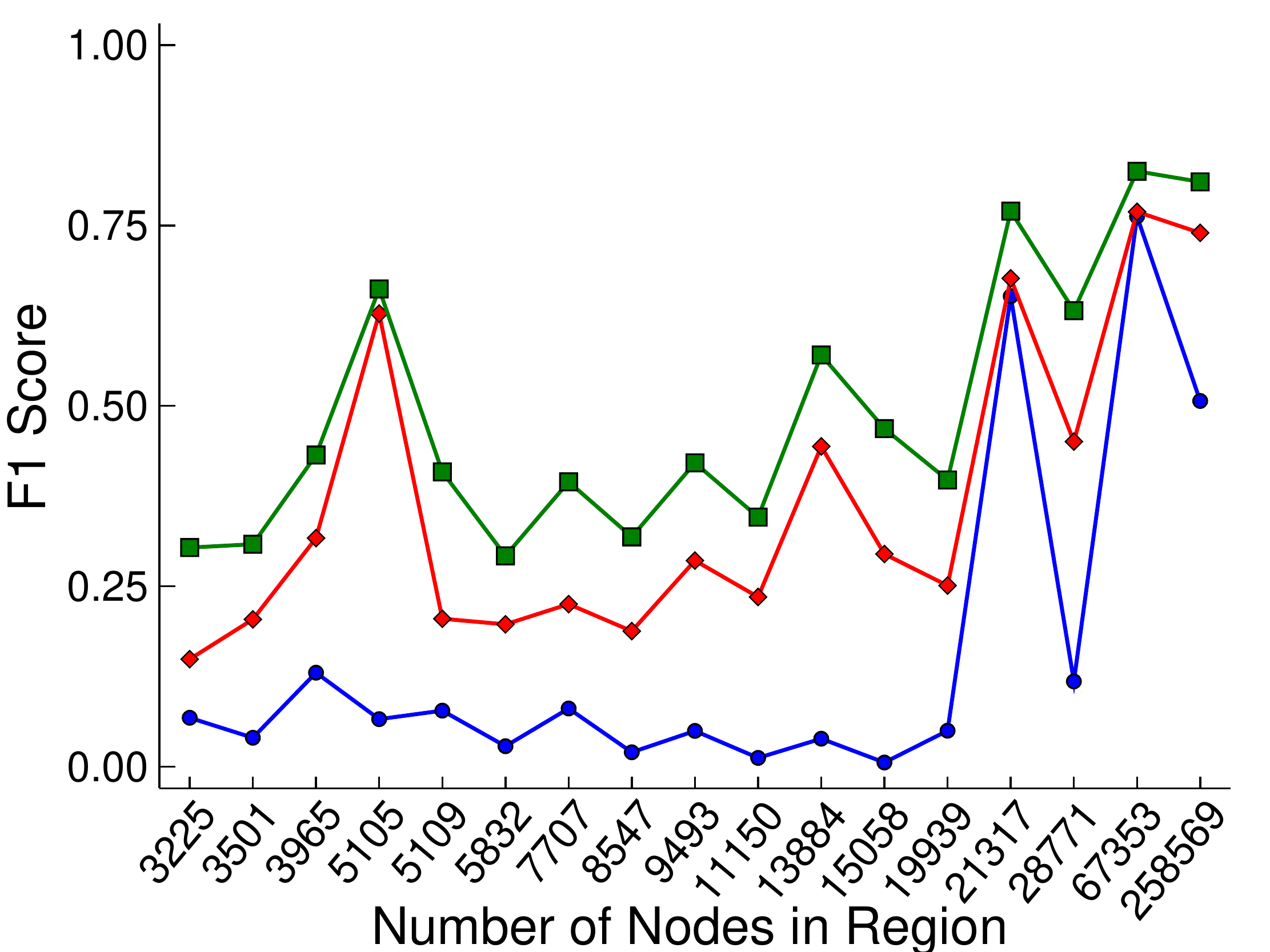}}\hfill
	\subfloat[2\% of Target]
	{\includegraphics[width=.5\linewidth]{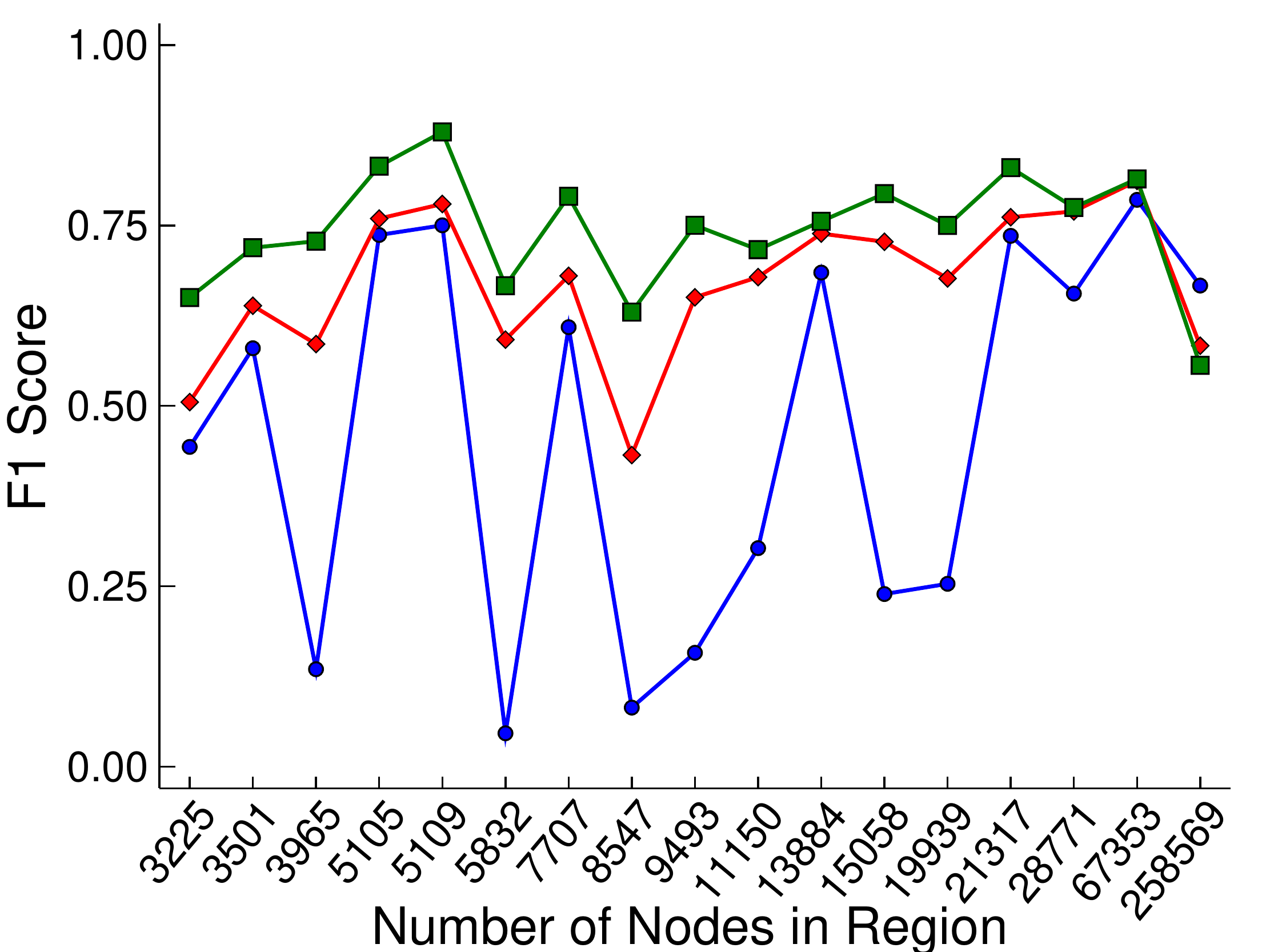}}\hfill
	\subfloat[3\% of Target]
	{\includegraphics[width=.5\linewidth]{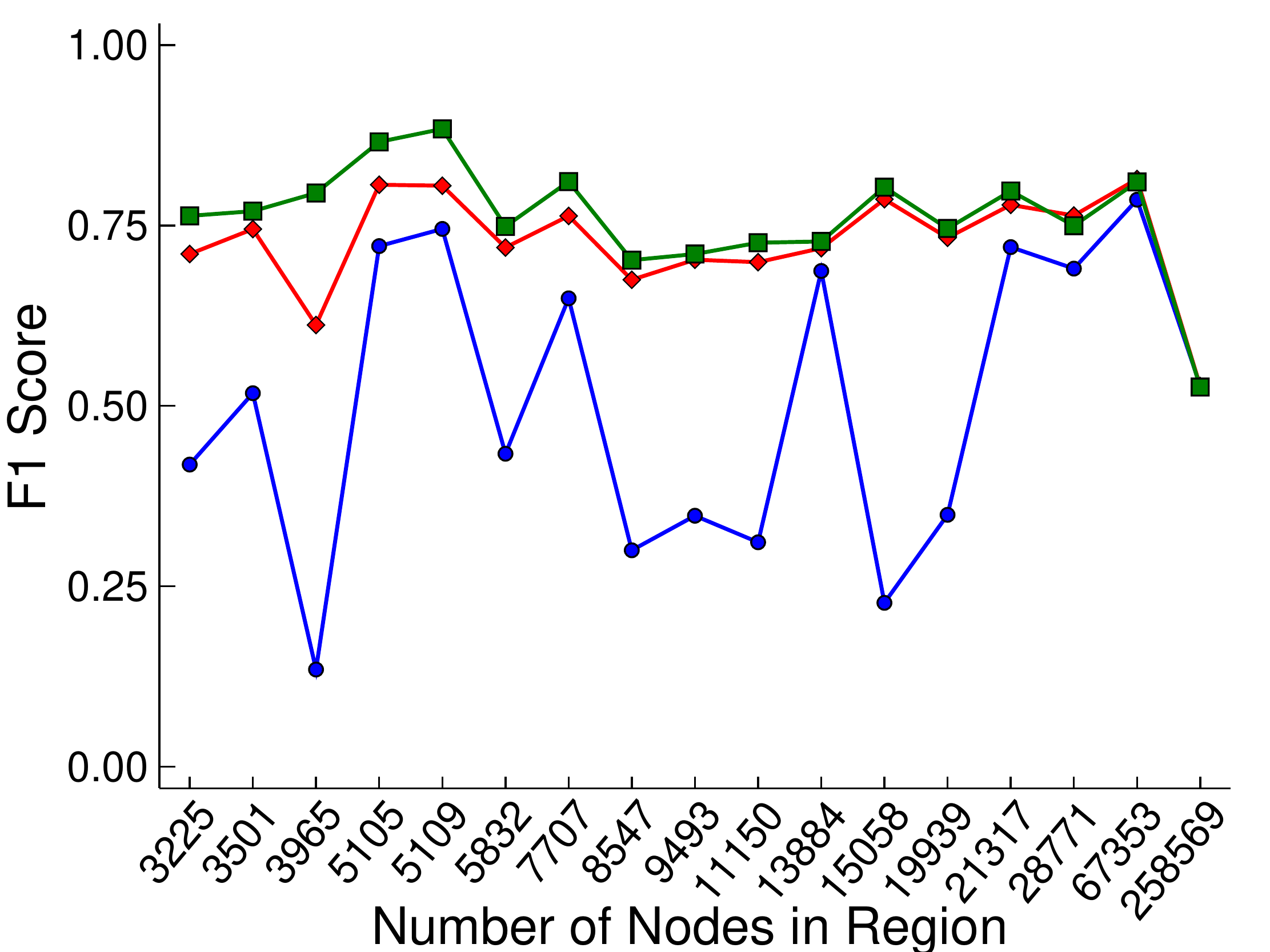}}\hfill
	\caption{F1 scores for \alg{FlowSeed} when $\varepsilon = 0.1$ on 17 brain regions using four different types of seed sets. 
		Green indicates both soft and strict penalties, the red curve shows results for just including strict penalties for nodes that are known to be in the target cluster, and the blue curve shows results for including no penalties. }
	\label{fig:eps1}
\end{figure}

\begin{figure}[t!]
	\subfloat[100 Target Nodes]
	{\includegraphics[width=.45\linewidth]{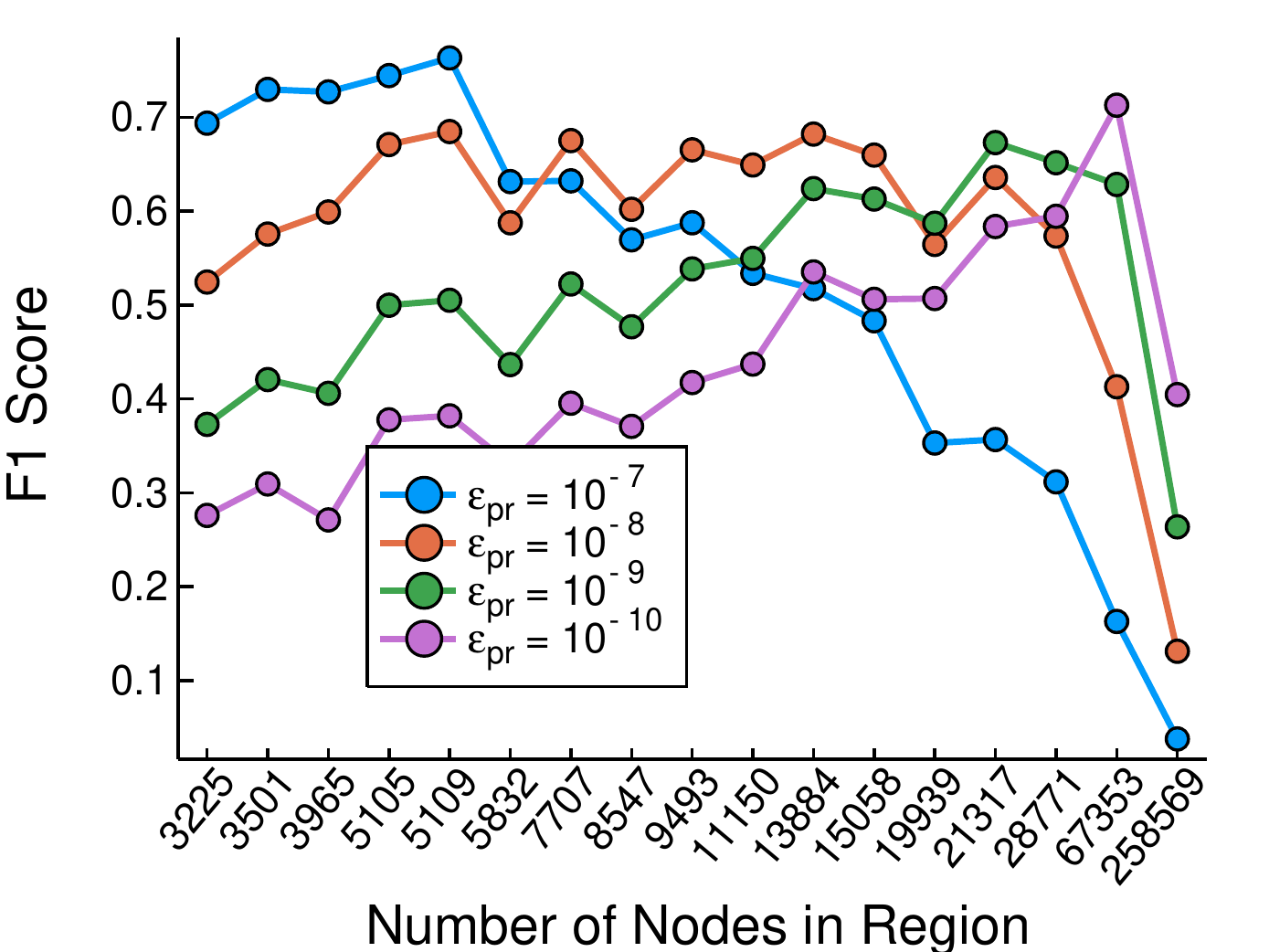}}\hfill
	\subfloat[1\% of Target]
	{\includegraphics[width=.45\linewidth]{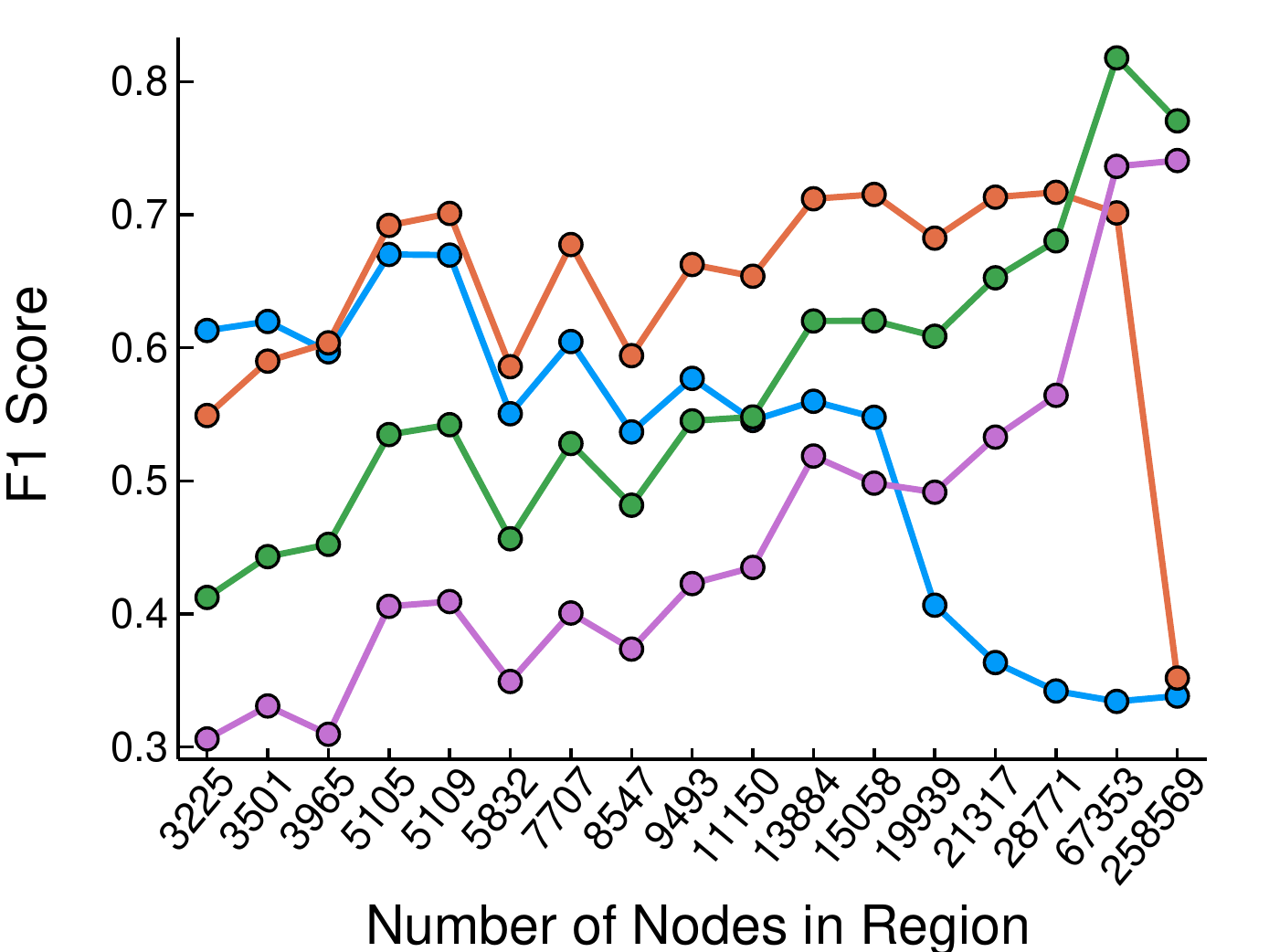}}\hfill
	\subfloat[2\% of Target]
	{\includegraphics[width=.45\linewidth]{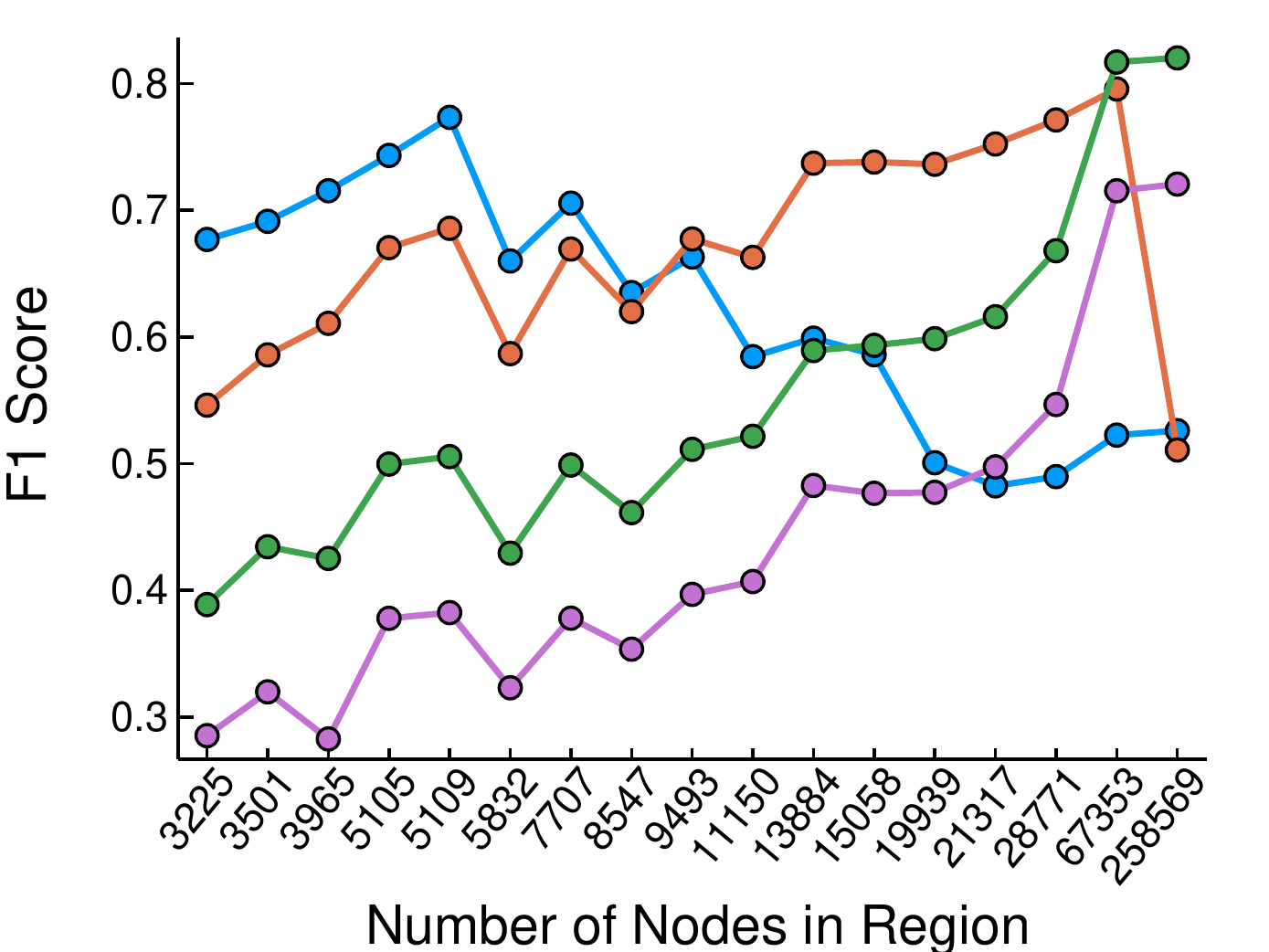}}\hfill
	\subfloat[3\% of Target]
	{\includegraphics[width=.45\linewidth]{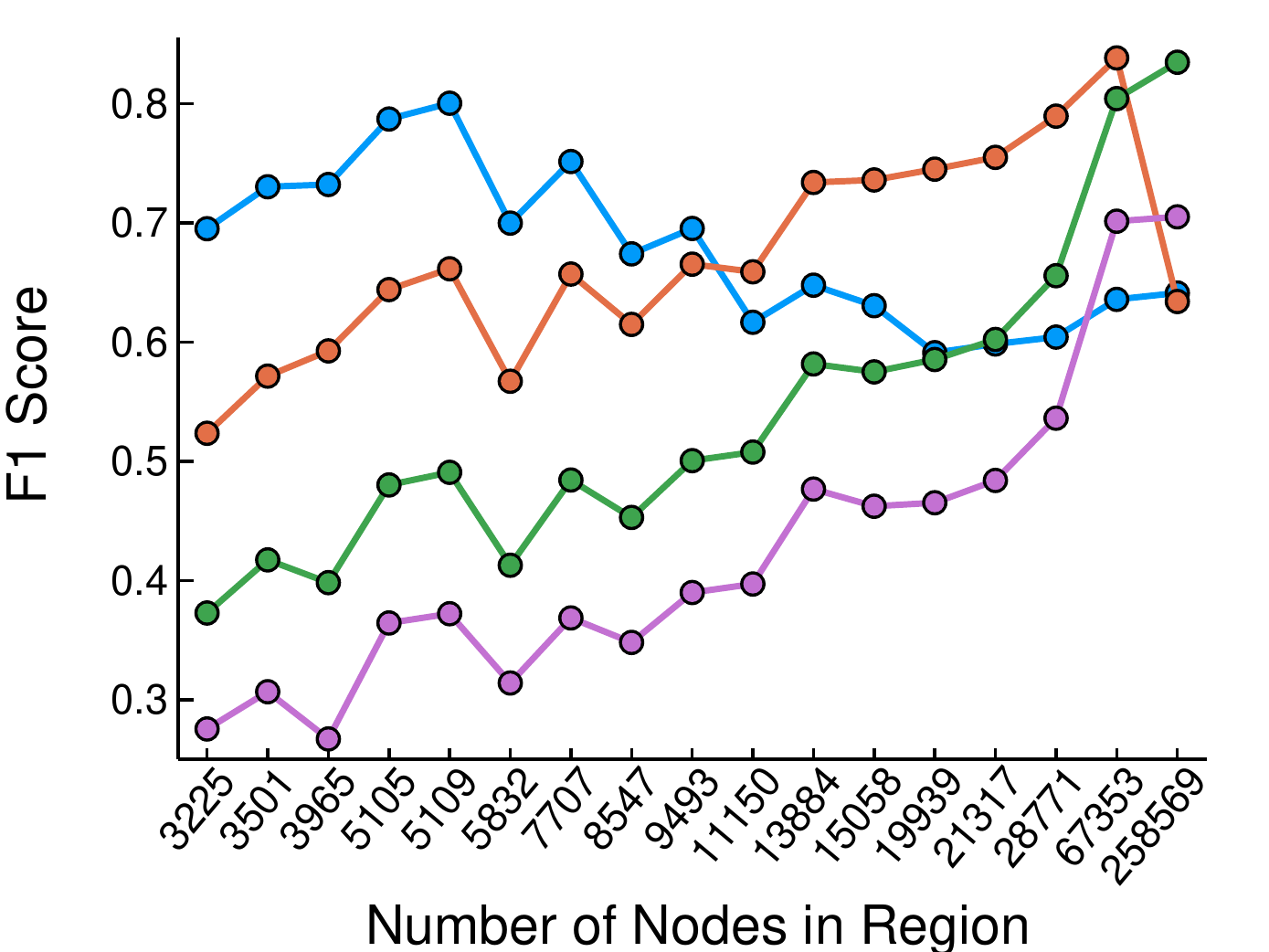}}\hfill
	\caption{F1 scores on 17 example regions of the brain graph using the PageRank \alg{Push} method with teleportation parameter $\alpha_{pr} = 0.6$ and a range of tolerance parameters $\varepsilon_{pr}$.
		Seed sets are 100 random nodes from the target region plus neighbors, or 1\%, 2\%, or 3\% of the region plus neighbors. As the region size increases, it becomes necessary to use smaller values of $\varepsilon_{pr}$ to accurately identify the target cluster.}
	\label{fig:ppr}
\end{figure}

\paragraph{Comparison with Random-Walk Methods} We additionally run the PageRank \alg{push} algorithm~\cite{AndersenChungLang2006} with teleportation parameters $\alpha_{pr}$ from $0.5$ to $0.9$, and approximate PageRank tolerance parameters $\varepsilon_{pr}$ from $10^{-11}$ to $10^{-7}$. In practice we find that smaller values of $\alpha_{pr}$ perform better, with little difference between parameters between 0.5 and 0.7. The values of $\varepsilon_{pr}$ we use here are significantly smaller than the ones we used for experiments in Section~\ref{cd}. This is because the MRI graph is much more structured and geometric than the real-world networks we considered in Section~\ref{cd}, and thus there are large sets of nodes with good topological community structure we wish to find in the MRI graph. To find these, we need to explore a wider region of the graph, and hence we must use small tolerance parameters.
We show results for all seed sizes for $\alpha_{pr} = 0.6$ in Figure~\ref{fig:ppr}.

For both \alg{Push} and \alg{FlowSeed}, we use observations from the experiments on the 17 example regions to inform our choice of parameter settings for different sized regions and seed set sizes. We then use these parameters to test the performance of each method on the remaining 78 regions, which we refer to as the evaluation set. We run experiments for the case where we know exactly 100 of the target nodes, and where $1\%, 2\%,$ and $3\%$ of the target region is given. We run \alg{Push} with a teleportation parameter of $\alpha_{pr} = 0.6$, and run \alg{FlowSeed} with both strict and soft penalties. In each experiment we identify which of the 17 example regions is closest in size to the target region from the evaluation set, and then set $\varepsilon$ and $\varepsilon_{pr}$ to be the values which led to the best F1-score recovery for this comparable example region. We plot results for all types of seed sets on a subset of the 78 regions (for easier display) in Figure~\ref{fig:test}.

\begin{figure}[t!]
	\centering
	\subfloat[100 Target Nodes]
	{\includegraphics[width=.45\linewidth]{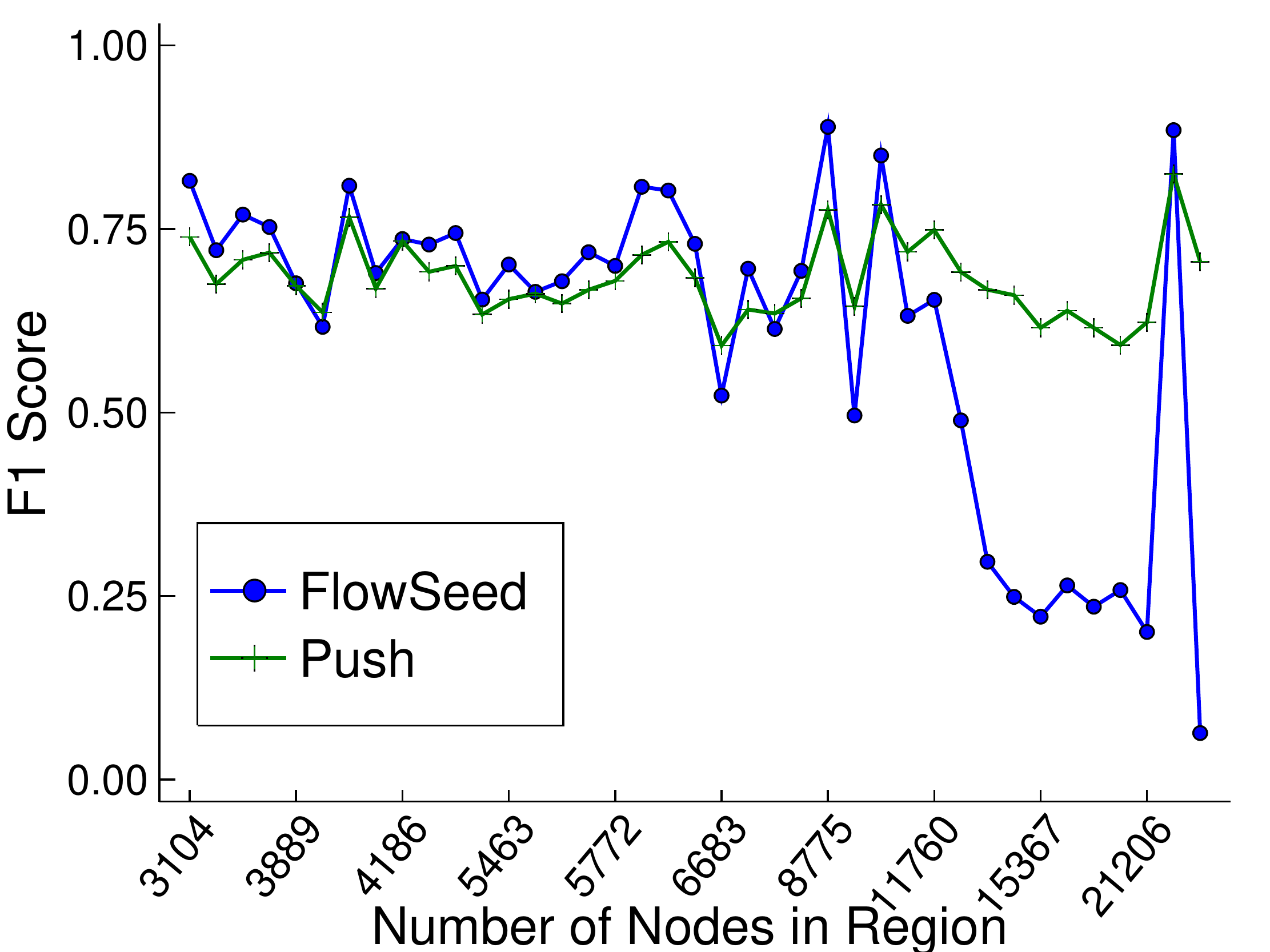}}\hfill
	\subfloat[1\% of Target]
	{\includegraphics[width=.45\linewidth]{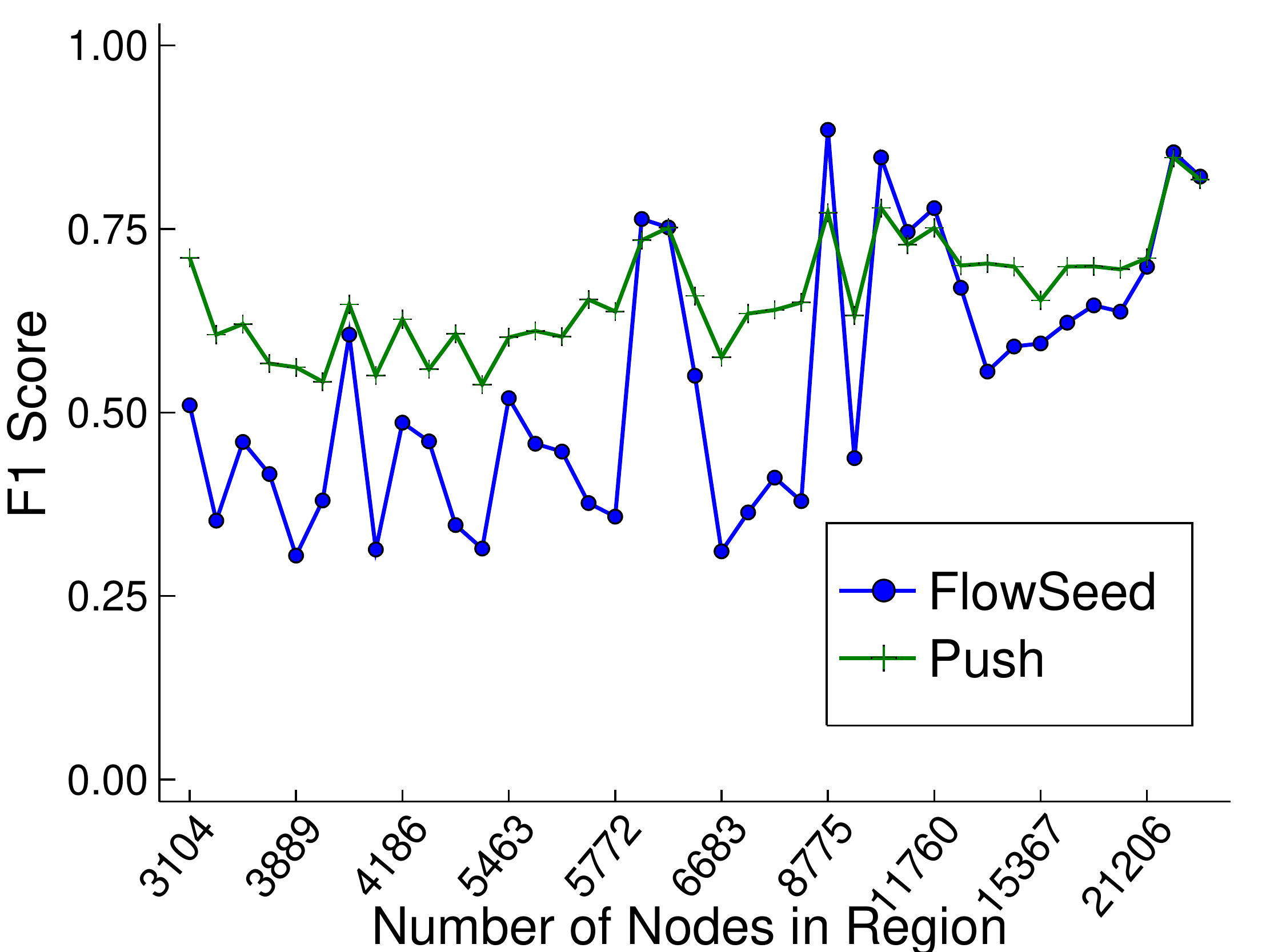}}\hfill
	\subfloat[2\% of Target]
	{\includegraphics[width=.45\linewidth]{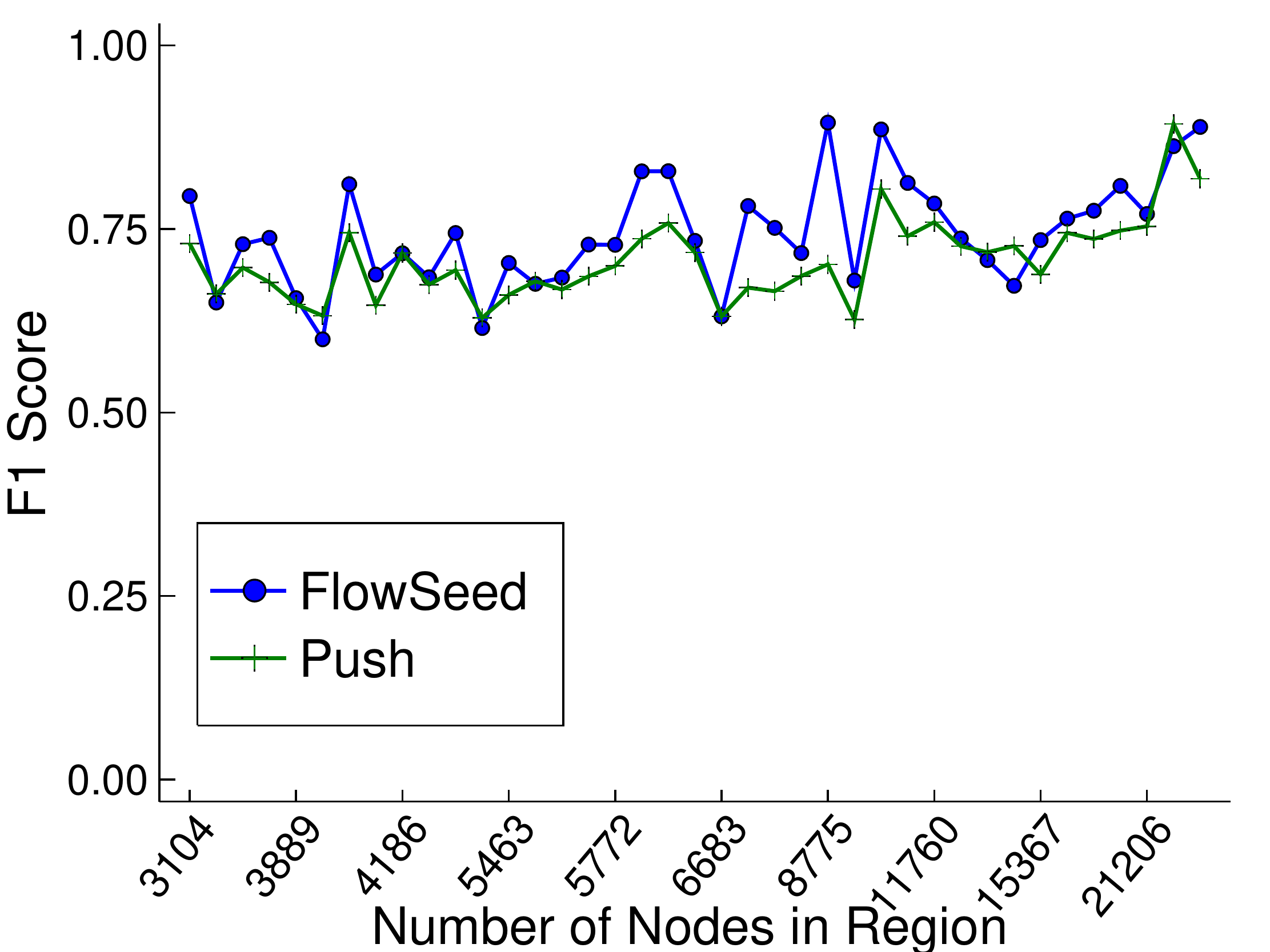}}\hfill
	\subfloat[3\% of Target]
	{\includegraphics[width=.45\linewidth]{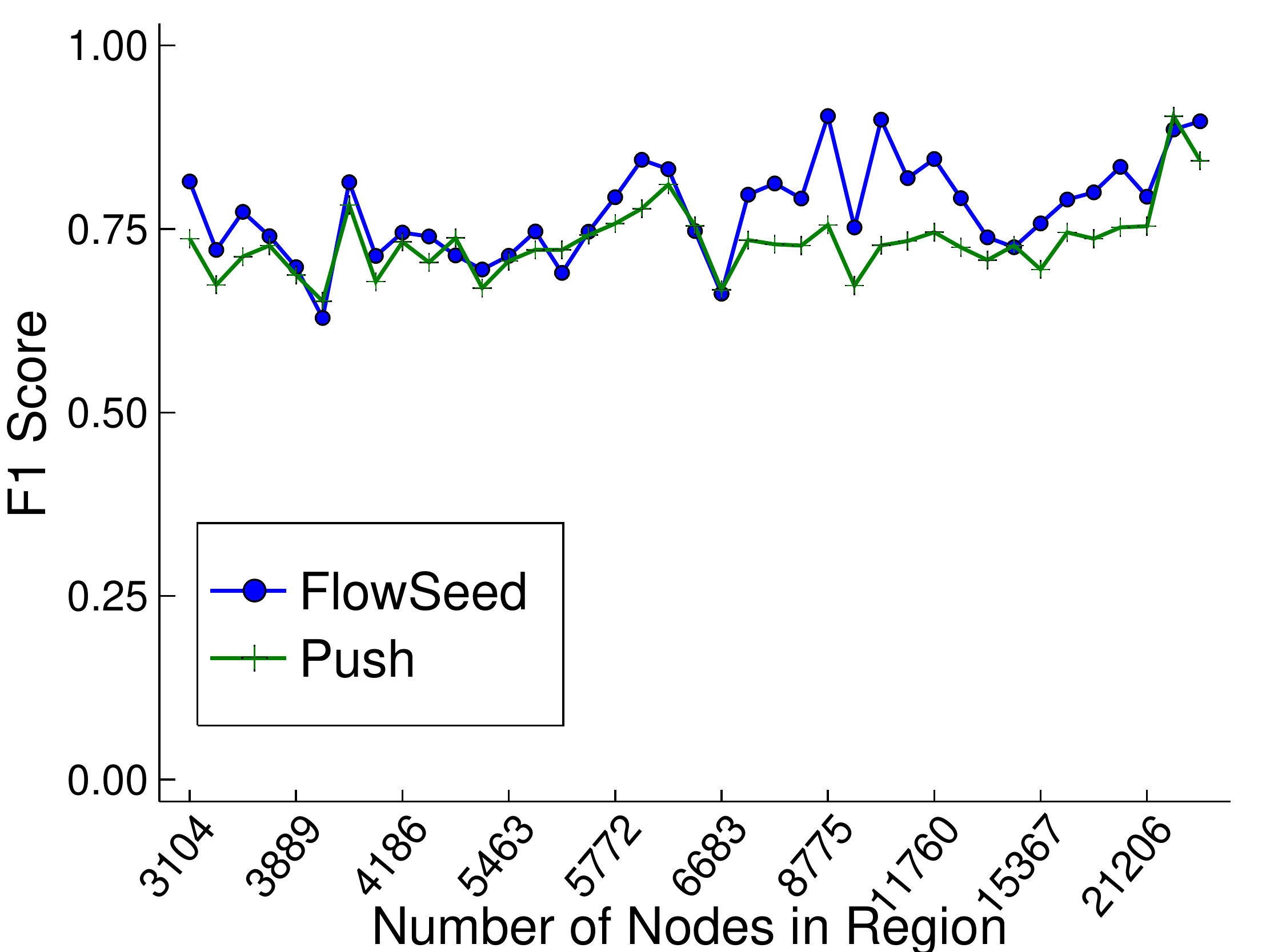}}\hfill
	\caption{F1 scores for both \alg{FlowSeed} and \alg{Push} on a subset of half the testing regions on the brain graph. Most region sizes are omitted from the $x$-axis for easier display. When enough of the target nodes are known (e.g. (c) and (d) and the first part of (a)), then \alg{FlowSeed} is able to outperform \alg{Push} in identifying target regions. When only a small amount of the target set is known, \alg{Push} performs better as it is able to quickly grow the seed set into a large enough set to capture most of the target region.
	}
	\label{fig:test}
\end{figure}

Our experiments highlight a tradeoff in the performance of the two algorithms. For small seed sets (100 target nodes plus their neighborhood, or 1\% of the target region plus neighbors), \alg{Push} typically outperforms \alg{FlowSeed} in ground truth recovery, as it is able to grow a very small seed set into a sizable cluster. However, given sufficient information regarding the target cluster, we see a distinct benefit in applying our flow-based approach. When 2\% (resp. 3\%) of the target is known, \alg{FlowSeed} obtains a higher F1 score for 64 (resp. 68) of the 78 target regions, and the scores are on average 6.2\% (resp. 6.1\%) higher than those returned by \alg{Push}.
In terms of runtime, the highly optimized \alg{Push} implementation is faster: most experiments run in under 1 second, with the largest taking several seconds. Our method takes up to 15 minutes for the largest region, but typically runs in 10-60 seconds for small and medium sized regions.


\subsection{Detecting an Atrial Cavity}

In our last experiment we demonstrate that random-walk and flow-based methods can be viewed as complementary approaches rather than competing algorithms. We combine the strengths of \alg{Push} and \alg{FlowSeed} to provide good quality 3D segmentations of a manually labeled left atrial cavity in a whole-body MRI scan. The dataset was provided as a part of the 2018 Atrial Segmentation Challenge, which sought efficient methods for automatic segmentation of the atrial cavity for clinical usage~\cite{xiong2018fully}. We convert one such MRI into a graph with 29.2 million nodes and 390 million edges using the same technique used to construct the brain graph. The cavity in the MRI corresponds to a target cluster with 252,364 nodes and a conductance of 0.0414 in the graph.

We begin from a small set of 100 randomly selected nodes from the atrial cavity, constituting less than 0.04\% of the target region. We grow these nodes by a one-hop and a two-hop neighborhood to produce two different seed sets to use as input to \alg{Push}. The algorithm's performance is very similar using both seed sets, so we just report results using the two-hop neighborhood. We again set $\alpha_{pr} = 0.6$ and test a range of tolerance parameters $\varepsilon_{pr}$ from $10^{-14}$ to $10^{-8}$. Looking at results in Figure~\ref{fig:cavity_push}, we see that the \alg{Push} algorithm is simply growing circular regions around seed nodes. Many of the output sets are not connected. From Table~\ref{tab:cavity}, we note that the best F1 score achieved by \alg{Push} is 0.714 when $\varepsilon_{pr} = 10^{-12}$.

\begin{figure}[t]
	\subfloat[\alg{Push} $\varepsilon_{pr} = 10^{-9}$]
	{\includegraphics[width=.25\linewidth]{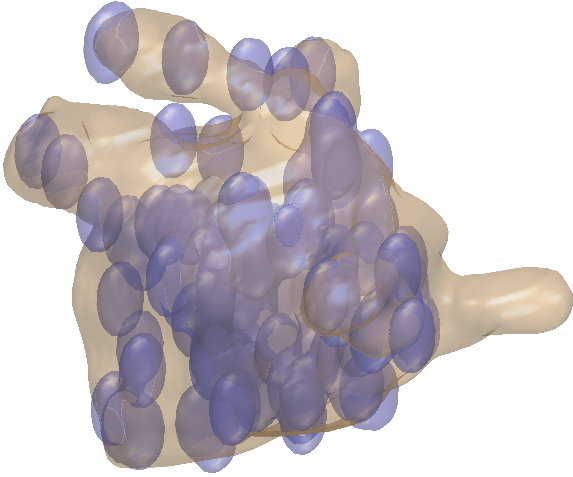}\label{del9}}\hfill
	\subfloat[\alg{Push} $\varepsilon_{pr} = 10^{-10}$]
	{\includegraphics[width=.25\linewidth]{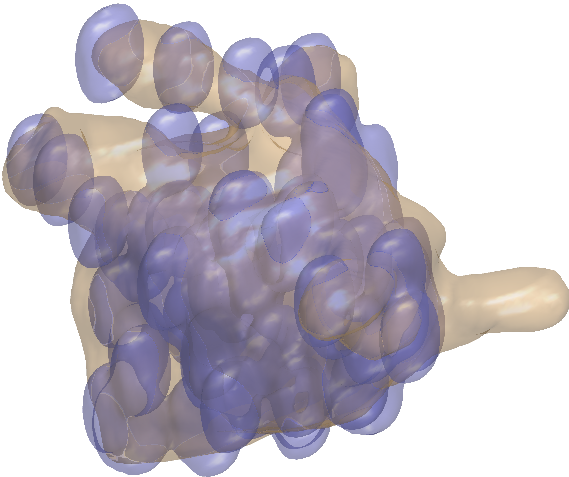}\label{del10}}\hfill
	\subfloat[\alg{Push} $\varepsilon_{pr} = 10^{-11}$]
	{\includegraphics[width=.25\linewidth]{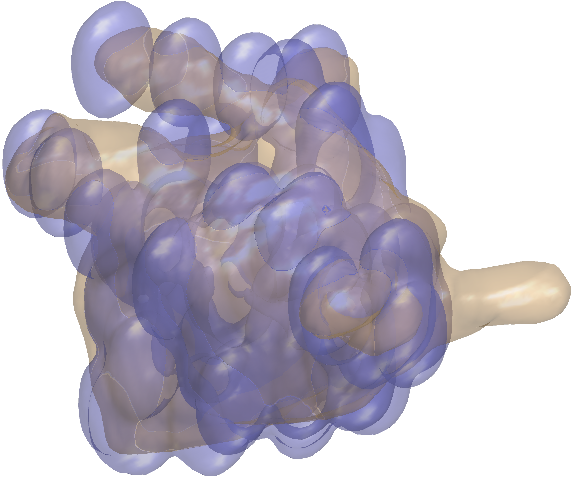}\label{del11}}\hfill
	\subfloat[\alg{Push} $\varepsilon_{pr} = 10^{-12}$]
	{\includegraphics[width=.25\linewidth]{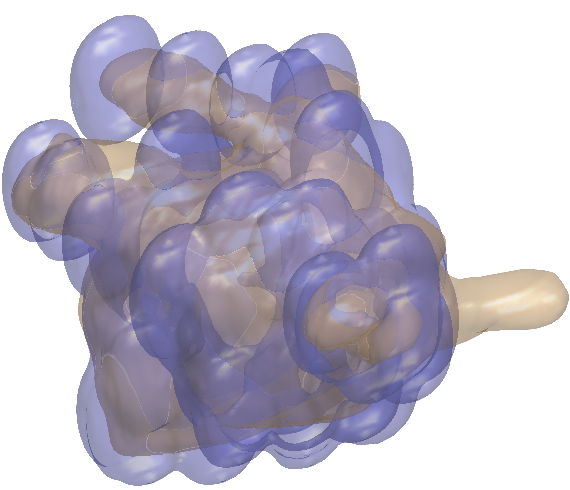}\label{del12}}\hfill
	\caption{The yellow region indicates the target cavity in a 29 million node graph constructed from a full-body MRI scan. Purple regions indicate sets returned by the \alg{Push} algorithm. Starting from a 100 random nodes in the target set plus their neighborhood, \alg{Push} grows circular regions which grow as $\varepsilon_{pr}$ decreases. }
	\label{fig:cavity_push}
\end{figure}

\begin{figure}[t]
	\subfloat[{\alg{FlowSeed} refinement of \alg{Push} ($\varepsilon_{pr} = 10^{-9}$)}]
	{\includegraphics[width=.275\linewidth]{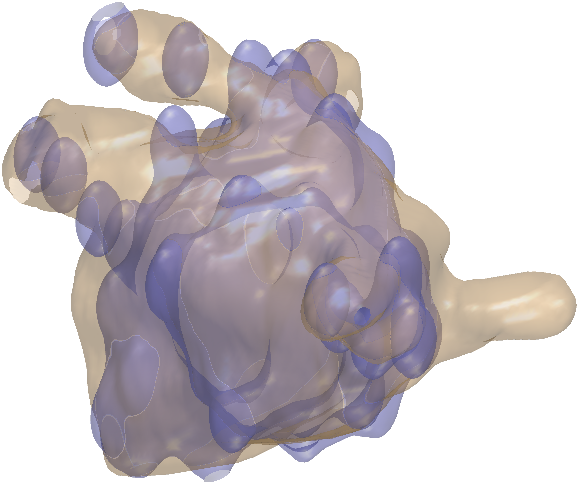}\label{sep8}}\hfill
	\subfloat[\alg{FlowSeed} refinement of \alg{Push} ($\varepsilon_{pr} = 10^{-10}$)]
	{\includegraphics[width=.275\linewidth]{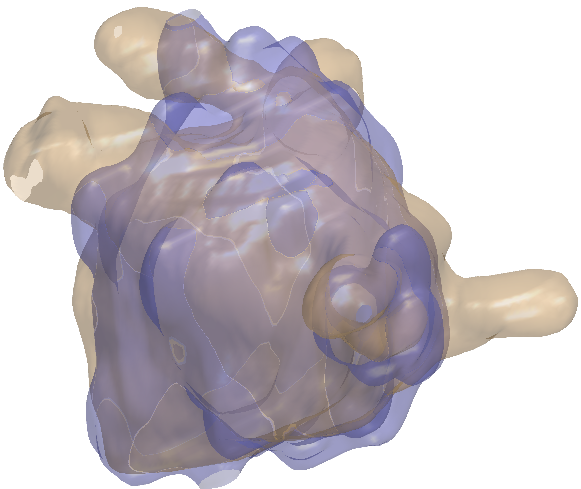}\label{sep9}}\hfill
	\subfloat[\alg{FlowSeed} refinement of \alg{Push} ($\varepsilon_{pr} = 10^{-11}$)]
	{\includegraphics[width=.275\linewidth]{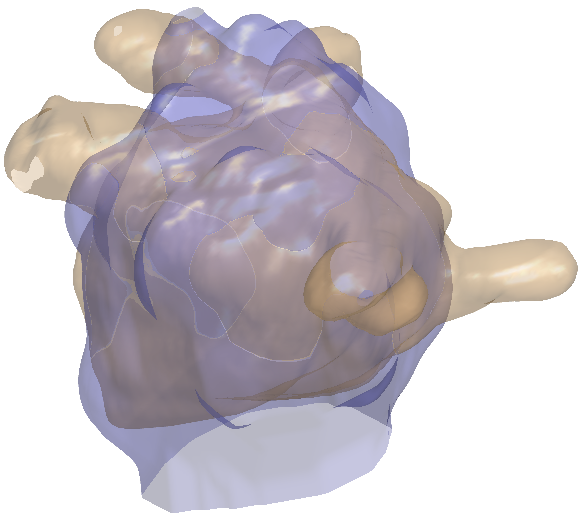}\label{sep10}}\hfill
	\caption{On the atrial cavity dataset, we refine the output of \alg{Push} (see Figure~\ref{fig:cavity_push}) using \alg{FlowSeed} with a locality parameter $\varepsilon = 0.1$. \alg{FlowSeed} (purple) fills in the interior of the target region (yellow) and more closely identifies the region boundary.}
	\label{fig:cavity_fs}
\end{figure}

\begin{table}[t!]
	\caption{Results for detecting a target atrial cavity in a graph constructed from a full-body MRI. Letting \alg{Push} expand a small seed set and then refining the output with \alg{FlowSeed} leads to better F1 scores than simply running \alg{Push} with smaller $\varepsilon_{pr}$.}
	\label{tab:cavity}
	\centering
	\begin{tabular}{lrrrrrr}
		\toprule
		method & $\varepsilon_{pr}$ & size & pr & re & F1 & time \\
		\midrule
		\alg{Push} & $10^{-9}$ & 72337  & 0.849& 0.243 & 0.378 & 1\\ 
		+\alg{FlowSeed} &  ($\varepsilon = .1$)& 160951  &  0.924 & 0.590 & 0.720  &1410  \\ 
		\midrule
		\alg{Push} & $10^{-10}$ & 133618 & 0.792 & 0.419 & 0.549 & 3 \\ 
		+\alg{FlowSeed} & ($\varepsilon = .1$) & 224842 &  0.850&	0.757 &0.801 & 3573\\ 
		\midrule
		\alg{Push} & $10^{-11}$ & 211571 & 0.732 & 0.614 & 0.668 & 4 \\ 
		+\alg{FlowSeed} & ($\varepsilon = .1$) & 296192 &  0.690 &	0.809	& 0.745 & 9800\\ 
		\midrule
		\alg{Push} & $10^{-12}$ & 290937 & 0.666 & 0.768 & 0.714 & 5 \\ 
		\alg{Push} & $10^{-13}$ & 367011 & 0.599 & 0.871 & 0.710 & 6 \\ 
		\bottomrule
	\end{tabular}
\end{table}

We next take the output of \alg{Push} and refine it using \alg{FlowSeed} with locality parameter $\varepsilon = 0.1$. We set a strict penalty on excluding the original 100 nodes from the cavity, a soft penalty of 1 on excluding their neighbors, and a penalty of 0.5 on excluding any node in the set returned by \alg{Push}. We see a significant improvement in quality of segmentation, in the best case leading to a precision of 0.8498, recall of 0.7571, and F1 score of 0.8008 when refining the region output by \alg{Push} when $\varepsilon_{pr} = 10^{-10}$ (see Table~\ref{tab:cavity}). In Figure~\ref{fig:cavity_fs} we see that \alg{FlowSeed} smooths out the circular regions returned by \alg{Push} to return a connected region that better identifies the boundary of the target cavity. Regarding runtime, \alg{Push} quickly grows circular regions within a few seconds, and the \alg{FlowSeed} refinement procedure takes just under an hour. Together these methods produce a significantly better output than either method could have accomplished alone.

\section{Conclusions and Future Work}
Flow-based algorithms for local graph clustering exhibit very strong cut improvement and runtime guarantees. In our work we have exploited efficient warm-start and push-relabel heuristics to provide practitioners with a very simple yet fast flow-based method for real-world data mining applications. In addition to outperforming related flow-based clustering algorithms in runtime, our method is able to better incorporate domain-specific semi-supervised information about ground truth target clusters in a large network, by giving users the option to specify penalties and strict constraints for excluding specific seed nodes from the output set. Given the success of seed exclusion penalties for flow-based methods, in future work we will continue to explore how similar penalties may be incorporated in other well-known clustering approaches including spectral and random-walk based techniques.

\bibliography{localclustering}
\bibliographystyle{plain}
\end{document}